\documentclass[11pt]{article}
\usepackage{amsmath, amssymb, amsthm}
\usepackage{graphicx}
\usepackage{enumerate}
\usepackage{enumitem}
\usepackage{xcolor}
\setlist{nosep}
\usepackage{natbib}
\usepackage{url} 
\usepackage{multirow}

\DeclareMathAlphabet{\bbold}{U}{bbold}{m}{n}
\newcommand{\1}{\ensuremath{\bbold{1}}}

\usepackage{caption}
\usepackage{setspace}

\DeclareCaptionFont{single}{\setstretch{1}}
\newcommand{\blind}{1}
\newcommand{\argmin}{\operatornamewithlimits{argmin}}
\newtheorem{theorem}{Theorem}

\begin{document}

\def\spacingset#1{\renewcommand{\baselinestretch}%
{#1}\small\normalsize} \spacingset{1}


\if1\blind
{
  \title{\bf Calibrated Bayesian Nonparametric Tolerance Intervals}
  \author{Tony Pourmohamad\\
    Data and Statistical Sciences, AbbVie\\
    and \\
    Robert Richardson \\
    Department of Statistics, Brigham Young University\\
    and \\
    Bruno Sans\'{o} \\
    Department of Statistics, University of California, Santa Cruz}
  \maketitle
} \fi

\if0\blind
{
  \bigskip
  \bigskip
  \bigskip
  \begin{center}
    {\LARGE\bf Calibrated Bayesian Nonparametric Tolerance Intervals}
\end{center}
  \medskip
} \fi

\bigskip
\begin{abstract}
Tolerance intervals provide bounds that contain a specified proportion of a population with a given confidence level, yet their construction remains challenging when parametric assumptions fail or sample sizes are small. Traditional nonparametric methods, such as Wilks’ intervals, lack flexibility and often require large samples to be valid. We propose a fully nonparametric approach for constructing one-sided and two-sided tolerance intervals using a calibrated Gibbs posterior. Leveraging the connection between tolerance limits and population quantiles, we employ a Gibbs posterior based on the asymmetric Laplace (check) loss function. A key feature of our method is the calibration of the learning rate, which ensures nominal frequentist coverage across diverse distributional shapes. Simulation studies show that the proposed approach often yields shorter intervals than classical nonparametric benchmarks while maintaining reliable coverage. The framework’s practical utility is illustrated through applications in ecology, biopharmaceutical manufacturing, and environmental monitoring, demonstrating its flexibility and robustness across diverse applications.
\end{abstract}

\noindent%
{\it Keywords:} Asymmetric Laplace, interval coverage, Gibbs posterior, check loss, quantile inference
\vfill

\newpage
\spacingset{1.9} 

\setlength{\abovedisplayskip}{3pt}
\setlength{\belowdisplayskip}{3pt}
\setlength{\abovedisplayshortskip}{0pt}
\setlength{\belowdisplayshortskip}{3pt}

\section{Introduction}
\label{sec:intro}
Tolerance intervals (TIs) play a fundamental role in statistical practice by providing bounds that contain a specified proportion of a population with a prescribed level of confidence \citep{guttman1970statistical, meeker2017statistical}. Unlike confidence intervals, which quantify uncertainty about unknown parameters, tolerance intervals are designed to control population coverage and are therefore particularly relevant in quality control \citep{hamada:2002, wangtsung:2017, yao:2020}, pharmaceutical manufacturing \citep{dong:2015, dong:2015b, schwenke:2021, oliva:2024}, and engineering \citep{sermer:2014, hofer:2017, chen:2020}. Despite their long history, the construction of tolerance intervals remains challenging outside parametric settings, especially when distributional assumptions are difficult to justify.

Classical parametric tolerance intervals rely on strong modeling assumptions and can exhibit substantial sensitivity to misspecification. Nonparametric alternatives, such as Wilks-type tolerance intervals \citep{wilks:1941}, avoid distributional assumptions but typically require large sample sizes and offer limited flexibility. In particular, existing nonparametric methods are often restricted to fixed forms of one-sided or two-sided intervals and do not readily accommodate alternative definitions of coverage, such as those targeting specific population quantiles rather than aggregate mass. These limitations motivate the development of principled nonparametric approaches that provide both flexibility and reliable frequentist guarantees.

A key insight underlying tolerance interval construction is its intimate connection with population quantiles \citep{patel:1986}. In the univariate setting, a one-sided tolerance bound corresponds directly to inference on a population quantile, while two-sided tolerance intervals can be represented in terms of pairs of quantiles. From this perspective, the construction of tolerance intervals reduces to the problem of performing inference on quantile functionals of an unknown distribution. This viewpoint suggests that advances in quantile inference may be leveraged to develop improved tolerance interval procedures \citep{taddy:2010}.

In this paper, we propose a nonparametric generalized Bayesian approach \citep{bissiri:2016} for constructing tolerance intervals based on Gibbs posterior inference for population quantiles \citep{martin:2022}. Our method uses the check (pinball) loss function to target quantiles directly \citep{koenker:1978}, without requiring a likelihood or parametric model for the data-generating process. The resulting Gibbs posterior provides a coherent representation of uncertainty for quantile functionals, which can be translated into uncertainty about tolerance bounds. In this way, the framework yields a posterior-based procedure while remaining fully nonparametric.

We further demonstrate that explicitly calibrating the learning rate $\eta$ of the Gibbs posterior ensures the intervals achieve nominal frequentist coverage while producing intervals that are, on average, consistently shorter than those of traditional nonparametric benchmarks. Moreover, this framework naturally extends to quantile-defined coverage, a task for which standard order-statistic-based methods are ill-suited. Through simulation studies and real-world applications in ecology, biopharmaceuticals, and environmental health, we illustrate the robustness of the proposed approach across diverse distributional shapes and small-sample settings.

The remainder of the paper is organized as follows. Section 2 reviews the foundational concepts of tolerance intervals and the Gibbs posterior framework. Section 3 details our methodology for constructing one-sided and two-sided tolerance intervals, introducing a joint posterior construction for the two-sided case. In Section 4, we describe the calibration algorithm for the learning rate $\eta$, which ensures the intervals achieve the target frequentist coverage. Section 5 presents simulation studies comparing our method to classical parametric and nonparametric benchmarks across various distributions. Section 6 demonstrates the practical utility of the calibrated Gibbs posterior through three real-world applications. Finally, Section 7 concludes with remarks and directions for future research.

\section{Tolerance Intervals and Population Quantiles}\label{TI}
Let $Y$ denote a univariate random variable with unknown distribution function $F$ supported on $\mathbb{R}$. A tolerance interval (TI) is an interval constructed from observed data that is intended to contain a specified proportion of the population with a prescribed level of confidence. Throughout, we focus on univariate tolerance intervals and make no parametric assumptions on $F$.
\subsection{One-Sided Tolerance Intervals}
An upper one-sided tolerance takes the form $(-\infty, U]$, where the bound $U$ is constructed so that 
\begin{align*}
\mathop{\text{Pr}}_{F} ( F(U) \geq P ) \geq 1-\alpha,
\end{align*}
for specified content level $P\in(0,1)$ and confidence level $1-\alpha\in(0,1)$. Equivalently, the bound $U$ is required to exceed the $P$-th population quantile with probability at least $1-\alpha$. Let $Q_P=\inf\{y:F(y)\geq P\}$ denote the $P$-th population quantile. The defining condition for an upper tolerance bound may be written as 
\begin{align*}
\mathop{\text{Pr}}_{F} ( U \geq Q_P ) \geq 1-\alpha.
\end{align*}
thus, the construction of a one-sided tolerance interval reduces to the problem of performing inference on a single population quantile. In particular, any procedure that yields an upper confidence bound for $Q_P$ with coverage $1-\alpha$ immediately induces a valid upper tolerance interval with content $P$. This equivalence highlights the central role of quantile inference in the construction of one-sided tolerance intervals and motivates approaches that target quantile functionals directly.

Analogously, a lower one-sided tolerance interval takes the form $[L,\infty)$, where the bound $L$ is constructed so that
\begin{align*}
\mathop{\text{Pr}}_{F} ( L \leq Q_{1-P} ) \geq 1-\alpha,
\end{align*}
with $Q_{1-P}=\inf\{y:F(y)\geq1-P\}$ the $(1-P)$-th population quantile. The construction of a lower tolerance bound thus reduces to inference on a single lower quantile, mirroring the upper-bound case.
\subsection{Two-Sided Tolerance Intervals}\label{subsec:twosidedTI}
A two-sided tolerance interval takes the form $[L,U]$ and is designed to satisfy
\begin{align*}
    \mathop{\text{Pr}}_{F} ( F(U) - F(L)\geq P ) \geq 1-\alpha.
\end{align*}
Unlike the one-sided case, this condition does not uniquely determine the endpoints $L$ and $U$, and multiple interpretations of two-sided tolerance intervals are possible. A common construction is based on quantile-defined tolerance intervals, in which the endpoints correspond to fixed population quantiles. Specifically, let $0<\tau_L<\tau_U<1$ with $\tau_U-\tau_L = P$, and define 
$L=Q_{\tau_L},\,\,\,\,\,U =Q_{\tau_U}$.
A tolerance interval of this form satisfies the desired content requirement by construction, and frequentist validity is achieved by ensuring joint coverage of the quantile pair $(Q_{\tau_L},Q_{\tau_U})$. Many classical parametric tolerance intervals may be interpreted in this framework. 

An alternative construction is based on content-defined tolerance intervals, in which the interval $[L,U]$ is required only to cover at least a proportion $P$ of the population, without imposing that the endpoints correspond to fixed quantiles. Under this condition, the inferential target is the population content $F(U)-F(L)$, and frequentist validity is assessed directly in terms of this quantity. Classical nonparametric tolerance intervals typically adopt one of these interpretations implicitly, with limited ability to accommodate alternatives. In contrast, the framework developed in this paper allows both quantile-defined and content-defined two-sided tolerance intervals to be constructed within a unified inferential approach.

\subsection{Tolerance Intervals as Quantile Inference Problems}
The preceding discussion shows that tolerance interval construction is fundamentally tied to inference on population quantiles. In the one-sided case, the problem reduces to inference on a single quantile, while in the two-sided case it involves either joint inference on a pair of quantiles or inference on population content expressed in terms of quantiles. This observation motivates a strategy in which tolerance intervals are constructed by first performing inference on the relevant quantile functionals of $F$, and then mapping this uncertainty to tolerance bounds. In the absence of a parametric model for $F$, this requires a nonparametric inferential framework capable of providing uncertainty quantification for quantiles.

In the following sections, we develop such a framework using generalized Bayesian inference via Gibbs posteriors based on the check loss function. This approach yields posterior distributions for population quantiles without requiring a likelihood. By treating the credible limits of these Gibbs posteriors as candidate tolerance bounds, we create a direct map between Bayesian uncertainty quantification and frequentist requirements. Specifically, we show that when the learning rate $\eta$ is properly calibrated, these posterior-based bounds satisfy the formal frequentist coverage conditions for both one-sided and two-sided tolerance intervals.

\section{Gibbs Posteriors for Quantile Inference}
Let $Y_1,...,Y_n$ denote independent and identically distributed (i.i.d.) observations from an unknown distribution $F$ supported on $\mathbb{R}$. We are interested in performing inference on the $\tau$-th population quantile
$Q_\tau=\inf\{y:F(y)\geq\tau\},\,\,\,\,\,0<\tau<1$,
which, as discussed in Section \ref{TI}, is the central object for constructing one- and two-sided tolerance intervals. In the absence of a likelihood for $F$, generalized Bayesian inference \citep{bissiri:2016} provides a principled framework for defining a posterior distribution directly on $Q_\tau$ via a loss function that targets the quantile property.
\subsection{Generalized Bayesian Posterior}
Let $\ell(Q_\tau;y)$ be a loss function measuring the discrepancy between a candidate quantile $Q_\tau$ and an observation $y$. The Gibbs posterior for $Q_\tau$ is defined as being proportional to the exponential of the negative cumulative loss, i.e., 
\begin{align}\label{gibbsposterior}
    \pi(Q_\tau|Y_{1:n})\propto\exp\left\{-\eta\sum_{i=1}^n\ell(Q_\tau;Y_i)\right\}\pi_0(Q_\tau),
\end{align}
where $\pi_0$ is a prior distribution on $Q_\tau$ and $\eta>0$ is a learning rate that determines the concentration of the posterior distribution. Unlike a traditional Bayesian posterior, the Gibbs posterior does not require a likelihood model for the data. By updating the prior through a loss function rather than a density, this framework allows for principled inference directly on specific functionals, such as quantiles, without requiring a complete specification of the data-generating process.

\subsection{Check Loss and Quantile Functionals}
A natural choice for $\ell$ is the check loss (also called the pinball loss):
\begin{align*}
    \rho_\tau(r) =r(\tau-\mathbb{I}\{r<0\}),
\end{align*}
where $r=y-Q_\tau$, and $\mathbb{I}\{\cdot\}$ is the usual indicator function. The check loss is uniquely suited to targeting quantiles since the population minimizer 
$Q^*_\tau = \argmin_{Q} \mathbb{E_F}[\rho_\tau(Y-Q)]$
satisfies $Q^*_\tau=Q_\tau$. Consequently, the Gibbs posterior defined with the check loss concentrates around the $\tau$-th population quantile, providing a nonparametric posterior distribution for this functional.

In the two-sided case, inference on a pair of quantiles $(Q_{\tau_L}, Q_{\tau_U})$ may be obtained by considering the joint Gibbs posterior
\begin{align}\label{twosidedposterior}
    \pi(Q_{\tau_L},Q_{\tau_u}|Y_{1:n})\propto \exp\left\{-\eta\sum_{i=1}^n[\rho_{\tau_l}(Y_i-Q_{\tau_l})+\rho_{\tau_u}(Y_i-Q_{\tau_u})]\right\}\pi_0(Q_{\tau_l},Q_{\tau_u}). 
\end{align}
This joint formulation allows us to simultaneously characterize the uncertainty of both interval endpoints, accommodating both the quantile-defined and content-defined interpretations introduced in Section \ref{subsec:twosidedTI}.

\subsection{Construction of Tolerance Intervals}\label{subsec:BayesTI}
The Gibbs posterior provides a representation of uncertainty for the quantile functionals, which is used to map to the tolerance bounds $L$ and $U$. Since one-sided tolerance intervals reduce to inference on a single quantile, we define the upper bound $U$ as the $(1-\alpha)$ posterior quantile of $\pi(Q_P \mid Y_{1:n})$. Similarly, the lower bound $L$ is defined as the $\alpha$ posterior quantile of $\pi(Q_{1-P} \mid Y_{1:n})$. This construction ensures that, under proper calibration of the learning rate $\eta$, these bounds satisfy the required frequentist confidence level $1-\alpha$ relative to the target population quantiles.

For two-sided tolerance intervals, the construction is more involved as it requires summarizing the joint posterior $\pi(Q_{\tau_L}, Q_{\tau_U} \mid Y_{1:n})$. It is important to note that for two-sided intervals, one cannot simply use the marginal posterior quantiles of $Q_{\tau_L}$ and $Q_{\tau_U}$ as the tolerance bounds. While such an approach might provide a credible interval for the quantiles themselves, it does not guarantee that the resulting interval $[L, U]$ will satisfy the frequentist coverage or content requirements defined in Section \ref{TI}. Because the two-sided tolerance condition depends on the joint behavior of the interval endpoints, we adopt a decision rule based on symmetry, first described by \cite{wolfinger:1998} and later refined by \cite[Chapter 11]{krishnamoorthy2009statistical}, to summarize the joint posterior distribution $\pi(Q_{\tau_L}, Q_{\tau_U} \mid Y_{1:n})$ and ensure that the resulting interval is sufficiently wide to achieve the prescribed confidence level. To accomplish this, we first compute the posterior mean of the midpoints, $\bar{\mu} = \mathbb{E}[(Q_{\tau_L} + Q_{\tau_U})/2 \mid Y_{1:n}]$, and then define the upper bound $U$ as the $(1-\alpha)$ posterior quantile of the random variable $U^* = \max(Q_{\tau_U}, 2\bar{\mu} - Q_{\tau_L})$. The lower bound is then set as $L = 2\bar{\mu} - U$. This ensures the resulting tolerance interval is centered relative to the posterior midpoints of the targeted quantiles.

Figure~\ref{fig:twosidedconstruction} illustrates the construction of a two-sided tolerance interval using the joint Gibbs posterior distribution of the lower and upper quantiles $(Q_{\tau_L},Q_{\tau_U})$ for a nominal confidence level of $1-\alpha=0.90$. Each point represents a posterior draw of the quantile pair. The solid boundary corresponds to the symmetry-based decision rule used to summarize the joint posterior, while the dashed lines indicate the bounds obtained by applying marginal posterior quantiles to $Q_{\tau_L}$ and $Q_{\tau_U}$ separately.

The figure makes clear that marginal posterior intervals fail to account for the dependence between the lower and upper quantiles and therefore exclude a substantial fraction of posterior draws that are jointly plausible. In contrast, the joint construction explicitly incorporates this dependence, producing an interval that is sufficiently wide to meet the prescribed confidence requirement. In this example, approximately 90\% of joint posterior draws fall within the symmetry-based region, matching the nominal confidence level, whereas only about 81\% are retained under the marginal construction. The symmetry line highlights how the final bounds are centered relative to the posterior midpoints of the targeted quantiles, ensuring balanced expansion of the interval endpoints.

\begin{figure}[htbp]
\centering
\includegraphics[width=0.7\textwidth]{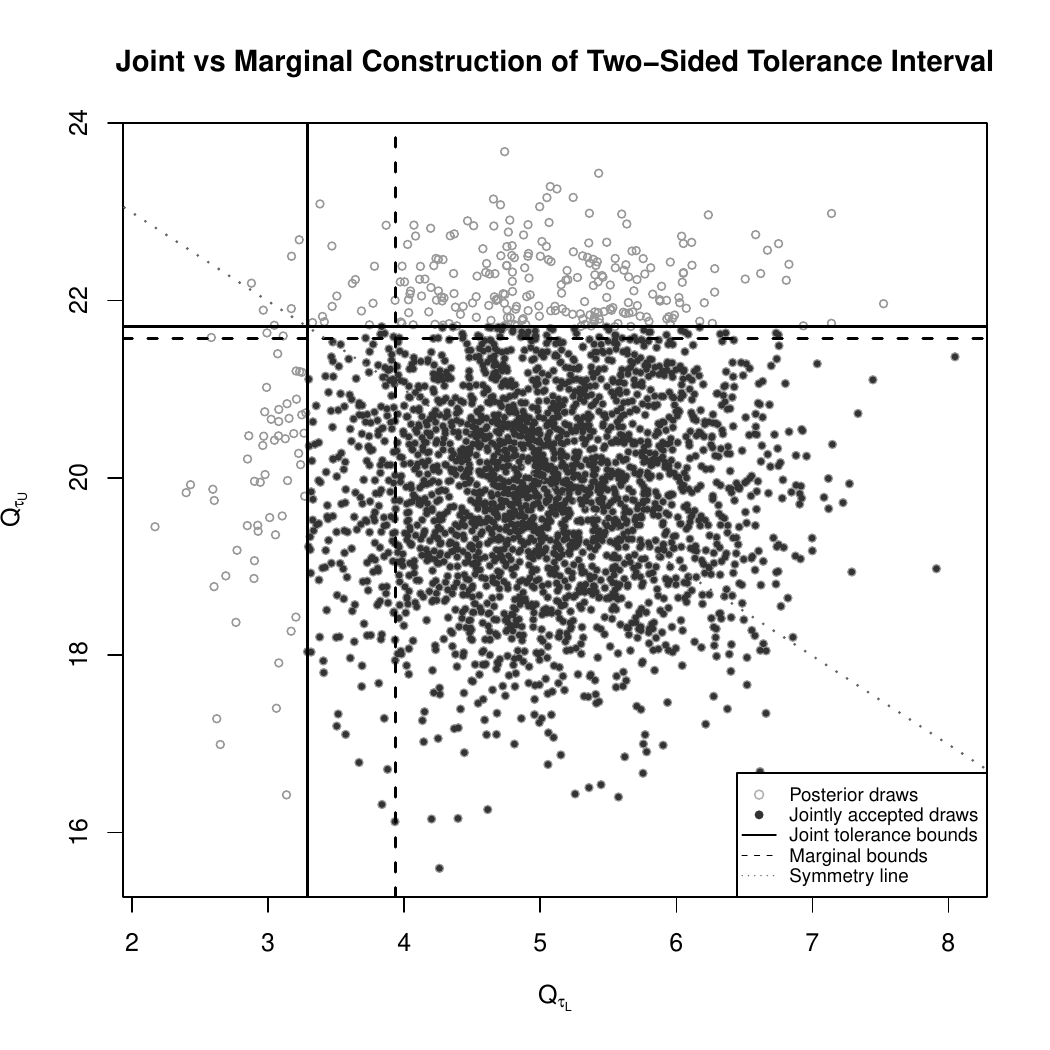}
\caption{Joint Gibbs posterior draws of $(Q_{\tau_L},Q_{\tau_U})$ and the symmetry-based two-sided tolerance interval construction.}
\label{fig:twosidedconstruction}
\end{figure}

\subsection{Prior Specification and Posterior Computation}
Our approach permits a fully nonparametric analysis in which inference is driven entirely by the loss function, without the need to posit a parametric sampling model or to inject prior information on the target functional. Accordingly, unless genuine prior information about the quantiles is available, we adopt a flat (improper) prior on the quantile functional(s), i.e., $\pi_0(Q_\tau)\propto 1$. This choice reflects an intentional lack of prior structure rather than a simplifying assumption, and ensures that posterior inference is determined by the data through the check loss and the calibrated learning rate $\eta$. While we focus on flat priors to ensure an objective, data-driven analysis, the Gibbs framework readily accommodates informative priors if such information is available. For instance, a proper normal prior could be placed on $Q_\tau$ to reflect expert knowledge, or a Uniform$(a, b)$ prior could be used to restrict the quantiles to a physically plausible range. In such cases, the prior $\pi_0(Q_\tau)$ would simply be multiplied by the exponential loss term in (\ref{gibbsposterior}) or (\ref{twosidedposterior}) without altering the underlying calibration logic.

In the two-sided setting, inference targets an ordered pair of quantiles. To enforce the constraint that the upper quantile, $Q_{\tau_U}$, exceeds the lower quantile, $Q_{\tau_L}$, while maintaining an unconstrained parameter space for computation, we reparameterize the posterior distribution in (\ref{twosidedposterior}) as follows:
\begin{align}
    \theta_1 &= Q_{\tau_L}\\
    \theta_2 &= \log(Q_{\tau_U}-Q_{\tau_L}),
\end{align}
so that $Q_{\tau_L}= \theta_1$ and $Q_{\tau_U}=\theta_1 + e^{\theta_2}$. This transformation guarantees that $Q_{\tau_U}$ will always be greater than $Q_{\tau_L}$. The resulting transformed posterior is then
\begin{align}
p(\theta_1,\theta_2 \mid Y_{1:n})
\;\propto\;
\exp\Bigg\{
-\eta \Bigg[
\sum_{i=1}^n \rho_{\tau_L}(Y_i - \theta_1)
+
\sum_{i=1}^n \rho_{\tau_U}\!\left(Y_i - \big(\theta_1 + e^{\theta_2}\big)\right)
\Bigg]
\Bigg\}
\times e^{\theta_2}.
\end{align}
The term $e^{\theta_2}$ in the resulting density accounts for the Jacobian of this transformation, ensuring the posterior remains valid in the unconstrained space. Posterior sampling is performed in the unconstrained parameterization using standard Markov chain Monte Carlo methods, such as random-walk Metropolis–Hastings or slice sampling. Posterior draws are transformed back to the original quantile scale, where the joint samples $(\theta_1^{(k)}, \theta_1^{(k)} + e^{\theta_2^{(k)}})$ are used to calculate the midpoints $\bar{\mu}$ and the auxiliary variable $U^*$ required to form the tolerance intervals as described in Section \ref{subsec:BayesTI}. 

\section{Calibration of the Learning Rate $\eta$}\label{sec:calibration}
The Gibbs posterior is defined through a loss function rather than a likelihood; consequently, its posterior dispersion depends critically on the learning rate $\eta$. While the posterior mode targets the desired quantile functional regardless of the value of $\eta$, the resulting credible intervals may be either too narrow or too wide when interpreted as tolerance intervals. Calibration of $\eta$ is therefore required to ensure that posterior-based tolerance intervals achieve nominal frequentist coverage.

We adopt a generalized posterior calibration (GPC) strategy \citep{martin:2022} that selects $\eta$ to directly enforce the defining coverage property of the desired interval. In this framework, $\eta$ is explicitly tuned to align the posterior concentration with the sampling variability of the target functional, thereby ensuring that the Bayesian credible intervals function as valid frequentist tolerance intervals. We distinguish between two primary calibration objectives: (i) Quantile Calibration, where the goal is valid frequentist coverage of the population functionals (quantiles), and (ii) Content Calibration, where the goal is to satisfy the $P$-content requirement of a tolerance interval with probability $1-\alpha$.
\subsection{The Calibration Procedure}
Since the true data-generating distribution is unknown, we calibrate the learning rate $\eta$ to ensure the Gibbs posterior delivers valid frequentist performance. We seek the optimal learning rate $\eta^*$ that satisfies the root of the coverage error function:$$h(\eta) = \mathbb{E}_{F}[\1(\text{Success})] - (1-\alpha) = 0$$where ``Success" is defined by the specific calibration objective (Quantile or Content). In practice, because $F$ is unknown and the expectation is evaluated via bootstrap sampling, $h(\eta)$ is observed with noise. To solve this, we employ a stochastic approximation approach via the Robbins-Monro algorithm \citep{robbins:1951}. Given a target confidence level $1-\alpha$, the learning rate is updated iteratively:\begin{align}\eta_{s+1} = \eta_s + \kappa_s \left( \hat{C}(\eta_s) - (1-\alpha) \right),\end{align}where $\hat{C}(\eta_s)$ is the estimated coverage probability obtained via $B$ bootstrap replicates at the current learning rate, and $\kappa_s = c / (1+s)^\gamma$ is a decaying step-size sequence. The hyperparameters $c$ and $\gamma$ are chosen to satisfy the standard conditions for stochastic approximation, ensuring that $\eta_s$ converges to the optimal learning rate $\eta^*$ as $s \to \infty$. This approach allows for robust calibration even when the coverage function is evaluated with Monte Carlo error. In our implementation, we use the sample quantiles of the full dataset as the ``true" parameters during the bootstrap process to evaluate the success of each iteration.

Specifically, at each iteration $s$ of the Robbins-Monro algorithm, the estimated coverage $\hat{C}(\eta_s)$ is computed through the following bootstrap procedure:
\begin{enumerate}
    \item Bootstrap Resampling: Generate $B$ bootstrap datasets $Y^{(1)}_{1:n}, \dots, Y^{(B)}_{1:n}$ by sampling with replacement from the original data $Y_{1:n}$.
    \item Posterior Sampling: For each bootstrap dataset $b$, construct the Gibbs posterior $\pi(Q_\tau \mid Y^{(b)}_{1:n}, \eta_s)$ and obtain posterior draws (via slice sampling or Metropolis-Hastings).
    \item Interval Construction: Using these draws, construct the $(1-\alpha)$ tolerance interval (or bound) $[L^{(b)}, U^{(b)}]$ according to the methods described in Section \ref{subsec:BayesTI}.
    \item Success Evaluation: Determine if the interval satisfies the specific success criterion associated with the inferential objective (see Sections \ref{quantilecalibration} and \ref{contentcalibration}).
    \item Aggregation: The estimate $\hat{C}(\eta_s)$ is the proportion of the $B$ replicates that resulted in a success.
\end{enumerate}
By using the original empirical quantiles (or the original data set) as the ``truth" in Step 4, we ensure that the learning rate $\eta$ is calibrated to the specific features of the observed distribution without assuming a parametric form.

\subsection{Practical Considerations}\label{subsec:considerations}
While the GPC procedure is theoretically justified under standard stochastic approximation conditions, successful implementation requires careful attention to tuning parameters and convergence diagnostics. A detailed discussion of these practical aspects, including recommendations for initial values, step-size selection, and monitoring convergence, is provided in Section 1 of the Supplementary Material.

\subsection{Quantile-Defined Calibration}\label{quantilecalibration}
In the quantile-defined setting, the objective is to calibrate $\eta$ such that the Gibbs posterior credible intervals provide valid frequentist coverage for the target population functionals. In our implementation, ``Success" at each bootstrap replicate $b$ is defined by whether the interval captures the ``true" quantile (estimated by the empirical quantile of the original data, $\hat{Q}_\tau$).
\begin{itemize}
    \item One-Sided Upper Bound: Success is defined as $\hat{Q}_P \leq U^{(b)}$, where $U^{(b)}$ is the $(1-\alpha)$ posterior quantile of $\pi(Q_P \mid Y^{(b)}_{1:n}, \eta_s)$.
    \item One-Sided Lower Bound: Success is defined as $\hat{Q}_{1-P} \geq L^{(b)}$, where $L^{(b)}$ is the $\alpha$ posterior quantile of $\pi(Q_{1-P} \mid Y^{(b)}_{1:n}, \eta_s)$.
    \item Two-Sided Joint Intervals: Success is defined as the event where the pair of population quantiles $(\hat{Q}_{\tau_L}, \hat{Q}_{\tau_U})$ falls within the joint credible region constructed via the symmetry rule described in Section \ref{subsec:BayesTI}.
\end{itemize}
Note, for one-sided bounds, quantile calibration and content calibration are mathematically equivalent. For example, requiring that the upper bound $U$ covers the $P$-th quantile ($\hat{Q}_P \leq U$) is identical to requiring that the proportion of the population below $U$ is at least $P$ (i.e., $F(U) \geq P$). This equivalence does not hold for two-sided intervals, necessitating the distinct calibration approach described in Section \ref{subsec:BayesTI}.

To formalize the implications of this calibration target, we next state the asymptotic validity result for the one-sided procedure. The theorem shows that once $\eta$ is calibrated to the one-sided success event, the resulting posterior tolerance bound is asymptotically correct from a frequentist coverage perspective.

\begin{theorem}[One-sided asymptotic validity of calibrated Gibbs bounds]\label{thm:onesided-main}
Under standard regularity conditions for quantile inference (continuity and positivity of $f$ at $Q_\tau$, locally positive prior, compact calibration bracket for $\eta$, and uniform consistency of the estimated coverage curve), the calibrated learning rate $\hat\eta_n$ converges in probability to a unique target value $\eta^\star$, and the resulting one-sided calibrated Gibbs tolerance bound attains asymptotic nominal coverage:
\[
P_F\{Q_\tau\le U_n(\hat\eta_n)\}=1-\alpha+o_p(1).
\]
In particular, for the quantile-calibration target used here,
\[
\eta^\star=\frac{f(Q_\tau)}{\tau(1-\tau)}.
\]
\end{theorem}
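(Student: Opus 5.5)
The plan is to reduce everything to a Bernstein--von Mises (BvM) approximation for the check-loss Gibbs posterior, uniformly in $\eta$ over the compact calibration bracket $[\eta_{\min},\eta_{\max}]\subset(0,\infty)$. From that approximation I will compute the limiting coverage curve $h(\eta)$ in closed form, identify its unique root $\eta^\star$, and then pass from the estimated curve to the true one by a standard root-consistency argument.

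\textbf{Step 1 (local expansion and BvM).} Write $\theta=Q_\tau$ and $\theta_0$ for the true quantile, and let $\hat\theta_n$ be the sample quantile. By Knight's identity, uniformly over $|u|\le M$,
\[
\sum_{i=1}^n\bigl[\rho_\tau(Y_i-\theta_0-u/\sqrt n)-\rho_\tau(Y_i-\theta_0)\bigr]=-uW_n+\tfrac12 f(\theta_0)u^2+o_p(1),
\]
where $W_n=n^{-1/2}\sum_i(\tau-\mathbb{I}\{Y_i<\theta_0\})$. This uses continuity and positivity of $f$ at $\theta_0$. Convexity of the empirical loss controls the tails of the Gibbs posterior outside $|u|\le M$, and the locally positive prior (including the flat prior) is asymptotically negligible.

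It follows that, uniformly in $\eta$ on the bracket,
\[
\pi(\cdot\mid Y_{1:n},\eta)\approx N\Bigl(\hat\theta_n,\;\{\eta n f(\theta_0)\}^{-1}\Bigr)
\]
in total variation. Consequently the $(1-\alpha)$ posterior quantile satisfies
\[
U_n(\eta)=\hat\theta_n+\frac{z_{1-\alpha}}{\sqrt{\eta n f(\theta_0)}}+o_p(n^{-1/2}),
\]
again uniformly in $\eta$.

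\textbf{Step 2 (coverage curve and $\eta^\star$).} Combine the standard quantile CLT, $\sqrt n(\hat\theta_n-\theta_0)\to N(0,\sigma^2)$ with $\sigma^2=\tau(1-\tau)/f(\theta_0)^2$, with Step 1. This gives
\[
P_F\{\theta_0\le U_n(\eta)\}\to h(\eta)+(1-\alpha),\qquad h(\eta)+(1-\alpha)=\Phi\Bigl(\frac{z_{1-\alpha}}{\sigma\sqrt{\eta f(\theta_0)}}\Bigr).
\]
For $\alpha<1/2$ this limit is continuous and strictly decreasing in $\eta$. It equals $1-\alpha$ exactly when $\sigma^2\eta f(\theta_0)=1$, that is, when $\eta^\star=f(\theta_0)/\{\tau(1-\tau)\}$, which is the stated formula. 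I assume $\eta^\star$ lies in the interior of the bracket. Strict monotonicity gives the separation condition $\inf_{|\eta-\eta^\star|\ge\epsilon}|h(\eta)|>0$ for every $\epsilon>0$.

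\textbf{Step 3 (calibration consistency and plug-in).} By the assumed uniform consistency of the estimated coverage curve, $\sup_\eta|\hat h_n(\eta)-h(\eta)|\to0$ in probability. This is justified by bootstrap consistency of the sample quantile under the same conditions on $f$, with $\hat\theta_n$ playing the role of the truth. I take the Robbins--Monro output $\hat\eta_n$ to be (asymptotically) a root of $\hat h_n$. The separation condition then yields $\hat\eta_n\to\eta^\star$ in probability, by the usual argmin/root-consistency argument.

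Because the expansion in Step 1 is uniform in $\eta$, I can substitute the random $\hat\eta_n$:
\[
\sqrt n\,\{U_n(\hat\eta_n)-\theta_0\}=\sqrt n(\hat\theta_n-\theta_0)+\frac{z_{1-\alpha}}{\sqrt{\hat\eta_n f(\theta_0)}}+o_p(1).
\]
By Slutsky's lemma the right-hand side converges to $N\bigl(z_{1-\alpha}\sigma,\sigma^2\bigr)$, using $\{\eta^\star f(\theta_0)\}^{-1/2}=\sigma$. Hence $P_F\{\theta_0\le U_n(\hat\eta_n)\}\to\Phi(z_{1-\alpha})=1-\alpha$.

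\textbf{Main obstacle.} The hardest step is the BvM statement in Step 1, because it must hold for a non-smooth loss and uniformly over $\eta$ in the bracket, so that it survives plugging in the data-dependent $\hat\eta_n$. My approach is the convexity argument in the style of Hjort--Pollard: pointwise convergence of the convex local log-posterior processes implies uniform convergence on compacts. I will then bound the posterior mass outside $|u|\le M$ by a linear lower bound on the convex loss, with constants that depend on $\eta$ only through $\eta_{\min}$. A secondary point is that Robbins--Monro delivers only an approximate root. I will handle this by assuming it is run long enough that $\hat h_n(\hat\eta_n)=o_p(1)$, which suffices for Step 3.
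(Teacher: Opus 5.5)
Your proposal is correct and follows the same route as the paper's proof. The steps match: Knight's identity plus a Laplace (BvM) approximation gives $U_n(\eta)=\hat Q_\tau+z_{1-\alpha}/\sqrt{\eta n f(Q_\tau)}+o_p(n^{-1/2})$; the quantile CLT gives the strictly decreasing limit curve $\Phi\bigl(z_{1-\alpha}\sqrt{f(Q_\tau)/\{\eta\tau(1-\tau)\}}\bigr)$ with unique root $\eta^\star=f(Q_\tau)/\{\tau(1-\tau)\}$; and uniform consistency of the estimated curve yields $\hat\eta_n\to\eta^\star$ and nominal coverage by plug-in. Your additions make explicit what the paper leaves implicit without changing the argument: uniformity over the bracket so the data-dependent $\hat\eta_n$ can be substituted via Slutsky, convexity-based tail control, the caveat $\alpha<1/2$, $\eta^\star$ interior to the bracket, and treating the Robbins--Monro output as an approximate root.
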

\noindent\emph{Proof.} See the supplementary material.

\subsection{Content-Defined Calibration}\label{contentcalibration}
For two-sided tolerance intervals, the primary interest is often the content of the resulting interval rather than the precision of the specific quantile estimates. Under this objective, we seek a learning rate $\eta$ that ensures the interval $[L, U]$ contains at least a proportion $P$ of the population with probability $1-\alpha$. Because the true distribution function $F$ is unknown, we estimate success by checking if the interval derived from the bootstrap sample captures the required proportion of the original data $Y_{1:n}$. This is equivalent to evaluating the success criterion against the empirical cumulative distribution function (ECDF) of the observed data, denoted $F_n$. Specifically, we define:
\begin{equation*}
\text{Success} \iff F_n(U^{(b)}) - F_n(L^{(b)}) \geq P,
\end{equation*}
which can be expressed computationally as:
\begin{equation*}
\text{Success} \iff \frac{1}{n} \sum_{i=1}^n \1(L^{(b)} \leq Y_i \leq U^{(b)}) \geq P.
\end{equation*}
By using the original data as a proxy for the population ``truth", the Robbins-Monro algorithm tunes $\eta$ so that the posterior width is sufficient to satisfy the content requirement $P$ across the bootstrap replicates. It is important to emphasize that this calibration often yields a different optimal $\eta^*$ than the quantile-defined approach. While quantile calibration focuses on the location and coverage of specific functionals (the points $Q_{\tau_L}$ and $Q_{\tau_U}$), content calibration is a ``global" requirement on the total mass captured by the interval width. 

Building directly on this distinction, the following theorem summarizes the two-sided large-sample guarantees under both calibration objectives. 

\begin{theorem}[Two-sided asymptotic validity of calibrated Gibbs intervals]\label{thm:twosided-main}
Under the corresponding two-quantile regularity conditions (continuity and positivity of $f$ at $Q_{\tau_L}$ and $Q_{\tau_U}$, locally positive prior for $(Q_{\tau_L},Q_{\tau_U})$, compact calibration bracket, and uniform consistency of estimated coverage), the calibrated learning rate converges to a unique target under either calibration criterion:
\begin{itemize}
    \item quantile-defined calibration, where success is $\{L_n\le Q_{\tau_L},\ U_n\ge Q_{\tau_U}\}$;
    \item content-defined calibration, where success is $\{F(U_n)-F(L_n)\ge P\}$.
\end{itemize}
Consequently, the resulting two-sided calibrated Gibbs interval achieves asymptotic nominal coverage for the selected criterion:
\[
P_F\big(\text{Success at }\hat\eta_n\big)=1-\alpha+o_p(1).
\]
\end{theorem}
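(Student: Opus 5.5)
The plan is to mirror the one-sided argument behind Theorem~\ref{thm:onesided-main}, replacing the scalar Bernstein--von Mises statement by a bivariate one and then treating the calibration step abstractly.

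\textbf{Step 1: bivariate Gibbs BvM.} Let $\theta=(Q_{\tau_L},Q_{\tau_U})$ and let $\hat\theta_n$ be the pair of sample quantiles. The check losses are additive across the two coordinates, so the flat-prior joint Gibbs posterior (\ref{twosidedposterior}) factorizes into a product of two one-sided Gibbs posteriors. With a locally positive prior this factorization holds only approximately, and the prior effect is negligible at scale $n^{-1/2}$. For each coordinate, the one-sided analysis behind Theorem~\ref{thm:onesided-main} gives
\[
\sqrt n\,(Q_\tau-\hat Q_\tau)\mid Y_{1:n}\ \Rightarrow\ N\bigl(0,\ 1/(\eta f(Q_\tau))\bigr)
\]
in total variation, in probability. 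This uses continuity and positivity of $f$ at $Q_{\tau_L}$ and $Q_{\tau_U}$ and a quadratic expansion of the empirical check risk. Jointly, $\sqrt n(\theta-\hat\theta_n)\mid Y_{1:n}$ is approximately $N(0,\Sigma_\eta)$ with $\Sigma_\eta=\mathrm{diag}\bigl(1/(\eta f(Q_{\tau_L})),\,1/(\eta f(Q_{\tau_U}))\bigr)$. Meanwhile $\sqrt n(\hat\theta_n-\theta_0)\Rightarrow N(0,V)$, where $V$ is the usual Bahadur covariance with off-diagonal entry $\tau_L(1-\tau_U)/(f_Lf_U)$.

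\textbf{Step 2: limit of the symmetry rule and of the success probability.} The midpoint mean satisfies $\bar\mu=(\hat Q_{\tau_L}+\hat Q_{\tau_U})/2+o_p(n^{-1/2})$. The variable $U^*$ is a Lipschitz functional of the posterior draw, so its $(1-\alpha)$ posterior quantile is
\[
U_n=\bar\mu+\tfrac12(\hat Q_{\tau_U}-\hat Q_{\tau_L})+n^{-1/2}c_\eta+o_p(n^{-1/2}),
\]
where $c_\eta$ is the $(1-\alpha)$ quantile of $\max(Z_U,-Z_L)$ with $Z\sim N(0,\Sigma_\eta)$. Since $c_\eta=c_1\eta^{-1/2}$, it is continuous and strictly decreasing in $\eta$. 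The same expansion gives $L_n=\bar\mu-(U_n-\bar\mu)$.

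For the quantile criterion, the success event becomes, asymptotically,
\[
\{\max(W_U,-W_L)\le c_\eta\},\qquad W\sim N(0,V).
\]
For the content criterion, the delta method gives
\[
\sqrt n\bigl(F(U_n)-F(L_n)-P\bigr)\Rightarrow f_U(W_U+c_\eta)-f_L(W_L-c_\eta).
\]
In both cases the limiting coverage $g(\eta)$ is continuous and strictly decreasing in $\eta$. It tends to $1$ as $\eta\to0$ and to a value below $1-\alpha$ as $\eta\to\infty$. Hence it has a unique root $\eta^\star$, which lies in the compact bracket for $\eta$ by assumption.

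\textbf{Step 3: convergence of $\hat\eta_n$ and coverage.} By the assumed uniform consistency, $\sup_\eta|\hat C_n(\eta)-g(\eta)|\to0$ in probability over the bracket. A standard argmin/root-consistency argument then gives $\hat\eta_n\to\eta^\star$: for each $\epsilon>0$, $g$ is bounded away from $1-\alpha$ outside $(\eta^\star-\epsilon,\eta^\star+\epsilon)$. The Robbins--Monro output is treated as an exact root of $\hat C_n$ up to vanishing error, as in the one-sided theorem. Finally, the true success probability $C_n(\eta)$ converges to $g(\eta)$ uniformly on the compact bracket, since the expansions in Step 2 are uniform and $g$ is equicontinuous. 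Therefore $C_n(\hat\eta_n)=g(\eta^\star)+o_p(1)=1-\alpha+o_p(1)$.

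\textbf{Main obstacle.} The hardest step is making the Step 2 expansions uniform in $\eta$ over the bracket. In particular, one must control the posterior quantile of the nonsmooth functional $\max(Q_{\tau_U},2\bar\mu-Q_{\tau_L})$, where the max introduces a kink. I would handle this with a uniform total-variation BvM: the posterior is log-concave, so the local quadratic approximation can be made uniform over $\eta\in[\eta_{\min},\eta_{\max}]$. Continuity of the Gaussian quantile map then transfers the uniform BvM to the posterior quantile of $U^*$.
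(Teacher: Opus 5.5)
Your proposal is correct and follows essentially the same route as the paper's proof. Both arguments use the same sequence of steps:
a joint Laplace/BvM approximation with covariance $\mathrm{diag}\{(\eta f(Q_{\tau_L}))^{-1},(\eta f(Q_{\tau_U}))^{-1}\}$;
the endpoint expansions $U_n=\hat Q_{\tau_U}+c_\eta/\sqrt n$ and $L_n=\hat Q_{\tau_L}-c_\eta/\sqrt n$ with $c_\eta\propto\eta^{-1/2}$;
the joint quantile CLT, giving continuous, strictly decreasing limiting coverage curves under both criteria;
and uniform consistency of the estimated coverage, which yields $\hat\eta_n\to\eta^\star$ and nominal coverage by plug-in.
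Your extra details make explicit steps the paper leaves implicit: the delta-method limit for the content criterion, the root-consistency argument, and uniformity over the $\eta$-bracket via log-concavity.
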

\noindent\emph{Proof.} See the supplementary material.

\section{Simulations}\label{sec:simulations}
We compare the proposed calibrated Gibbs (Cal-Gibbs) tolerance intervals with established nonparametric and Bayesian alternatives commonly used for quantile-based inference and tolerance limit construction. Full methodological details, derivations, and prior specifications are provided in the supplementary metrials.

\subsection{Comparative Methods}\label{subsec:comparative}
We evaluate the Calibrated Gibbs (Cal-Gibbs) posterior against several benchmarks. The classical Wilks method \citep{wilks:1941} defines limits using order statistics. However, its discrete nature often results in conservative coverage and unnecessarily wide intervals. To address this, \citep{young:2014} propose interpolated and extrapolated order statistics (YM) to achieve a coverage closer to the nominal levels. We also compare against two Bayesian approaches: Bayesian Quantile Regression (BQR-AL) \citep{yu:2001} and the Extended Asymmetric Laplace (Ext-AL) model \citep{yan:2025}. These methods utilize the Asymmetric Laplace distribution as a working likelihood. While flexible, these models are inherently misspecified if the data-generating process does not follow an AL distribution. Without an explicit calibration step like the one proposed in our Cal-Gibbs framework, these models may fail to maintain reliable frequentist coverage in heavy-tailed or highly asymmetric settings. 

\subsection{One-Sided Tolerance Interval  Results}\label{sec:onesidedsimulations}
To evaluate the empirical performance of the proposed calibrated Gibbs posterior, we conduct a comprehensive simulation study across four distributional settings with varying characteristics. We focus specifically on upper one-sided $(P, 1-\alpha)$ tolerance limits, as the results for the lower one-sided case were found to be symmetric in performance. To assess the ability of the method to handle tail estimation (which is inherently more challenging than center-of-mass estimation),we vary the content level over $P \in \{0.90, 0.95, 0.99\}$ while fixing the confidence level at $1-\alpha = 0.90$ across all trials.

A critical consideration in this simulation design is the sample size $n$. For the classical Wilks method, even the most optimistic choice of the upper tolerance limit, $U = Y_{(n)}$, requires the sample size to satisfy $\text{Pr}\{F(Y_{(n)}) \ge P\} = 1 - P^n \ge 1-\alpha.$ Solving for $n$ yields the minimum required sample size $n \ge \ln(\alpha)/\ln(P)$. For example, achieving a $99\%$ content level at $90\%$ confidence requires at least $n = 230$ observations. When $n$ falls below this threshold, the Wilks method cannot mathematically guarantee the required coverage, even when using the most extreme observation. Consequently, we select sample sizes $n \in \{22, 45, 230\}$ corresponding to $P \in \{0.90, 0.95, 0.99\}$.

For each experimental setting, we perform 1{,}000 independent Monte Carlo repetitions. Performance is evaluated using two primary metrics: empirical coverage, defined as the proportion of trials in which the estimated upper limit $U$ satisfies $F(U) \ge P$, and the average length of the resulting tolerance bound.

We consider four distributions chosen to represent a broad spectrum of challenges for quantile-based uncertainty quantification: (i) a standard normal, $N(0,1)$, as a symmetric, light-tailed baseline; (ii) a Gamma$(2,1)$ to introduce moderate skewness; (iii) a heavy-tailed Pareto$(1,2)$ to highlight the risks of misspecified working likelihoods in BQR-AL; and (iv) a highly leptokurtic Normal Mixture, $0.9\mathcal{N}(0,1) + 0.1\mathcal{N}(0,10^2)$, which tests the stability of the calibrated learning rate $\eta$ against extreme outliers. These settings allow us to identify where traditional Bayesian credible intervals fail to provide frequentist tolerance guarantees and demonstrate how the calibrated Gibbs posterior adapts to maintain nominal $1-\alpha$ coverage across diverse distributional shapes and tail behaviors.

The results of the one-sided simulation study are summarized in Table~\ref{tab:ti_sim_nox}. Several clear patterns emerge regarding both the reliability of frequentist coverage and the efficiency of the resulting tolerance limits across the different distributional settings. The proposed Cal-Gibbs method exhibits remarkably stable performance, maintaining empirical coverage close to the nominal $0.90$ level across all four distributions and all three content levels. In contrast, the Bayesian benchmark methods display substantial instability. The BQR-AL model with normal approximation is notably conservative in the light-tailed Normal and Gamma settings, but suffers from severe under-coverage in the heavy-tailed Pareto ($0.694$) and Mixture ($0.643$) cases at $P=0.99$. This behavior confirms that a fixed working likelihood (even when paired with an asymptotic sandwich correction) fails to adequately capture uncertainty when the tail behavior of the data-generating process exceeds that of the Asymmetric Laplace distribution. The Ext-AL model exhibits even greater volatility. Although it occasionally attains high coverage, its performance is highly inconsistent across settings, particularly at $P=0.95$, where empirical coverage falls as low as $0.367$. This indicates that while the Extended AL distribution provides additional flexibility, the absence of an explicit frequentist calibration mechanism renders it unreliable for tolerance interval construction.

As expected, the nonparametric Wilks and YM methods generally achieve correct coverage whenever the sample size satisfies the feasibility condition $n \ge \ln(\alpha)/\ln(P)$. However, this robustness comes at the cost of substantially wider tolerance limits. In the Pareto and Mixture settings, the Cal-Gibbs method produces intervals that are markedly shorter than those of the nonparametric benchmarks. For instance, in the Pareto case at $P=0.90$, Cal-Gibbs yields an average interval length of $4.877$, compared to $9.418$ for Wilks and $9.261$ for the YM method. This gain in efficiency arises from the Gibbs posterior’s ability to incorporate information from the full sample through the calibrated learning rate $\eta$. In contrast, the Wilks and YM procedures rely solely on extreme order statistics, which are inherently unstable in heavy-tailed distributions. The Cal-Gibbs approach therefore achieves a more concentrated and stable estimate of the target quantile while still maintaining the required frequentist coverage.

\begin{table}[htbp]
\centering
{\tiny
\begin{tabular}{l l | ccc | ccc}
\hline
& & \multicolumn{6}{c}{Content level $P$} \\
\cline{3-8}
& 
& \multicolumn{3}{c|}{Coverage} 
& \multicolumn{3}{c}{Length} \\
\cline{3-8}
Distribution & Method 
& $P=0.90$ & $P=0.95$ & $P=0.99$
& $P=0.90$ & $P=0.95$ & $P=0.99$ \\
\hline

\multirow{5}{*}{$\mathcal{N}(0,1)$}
 & Cal-Gibbs & 0.896 & 0.905 & 0.905 & 1.733 & 2.030 & 2.656 \\
 & Wilks     & 0.886 & 0.907 & 0.899 & 1.921 & 2.213 & 2.784 \\
 & YM        & 0.902 & 0.899 & 0.906 & 1.898 & 2.205 & 2.772 \\
 & BQR-AL    & 0.997 & 1.000 & 1.000 & 2.208 & 2.545 & 3.196 \\
 & Ext-AL    & 0.873 & 0.716 & 0.998 & 1.654 & 1.790 & 12.237 \\
\hline

\multirow{5}{*}{Gamma$(2,1)$}
 & Cal-Gibbs & 0.895 & 0.893 & 0.901 & 4.916 & 5.776 & 7.661 \\
 & Wilks     & 0.900 & 0.890 & 0.894 & 5.544 & 6.354 & 8.241 \\
 & YM        & 0.901 & 0.898 & 0.902 & 5.531 & 6.345 & 8.239 \\
 & BQR-AL    & 0.943 & 0.933 & 0.936 & 5.082 & 5.885 & 7.745 \\
 & Ext-AL    & 0.679 & 0.367 & 1.000 & 4.256 & 4.592 & 35.507 \\
\hline

\multirow{5}{*}{Pareto$(1,2)$}
 & Cal-Gibbs & 0.899 & 0.898 & 0.893 & 4.877 & 7.211 & 17.082 \\
 & Wilks     & 0.893 & 0.903 & 0.902 & 9.418 & 12.892 & 24.822 \\
 & YM        & 0.902 & 0.903 & 0.894 & 9.261 & 12.875 & 24.793 \\
 & BQR-AL    & 0.942 & 0.875 & 0.694 & 4.807 & 6.473 & 12.360 \\
 & Ext-AL    & 0.638 & 0.399 & 0.968 & 3.826 & 4.538 & 17.714 \\
\hline

\multirow{5}{*}{$0.9\mathcal{N}(0,1)+0.1\mathcal{N}(0,10^2)$}
 & Cal-Gibbs & 0.892 & 0.908 & 0.899 & 3.976 & 6.748 & 18.802 \\
 & Wilks     & 0.903 & 0.899 & 0.898 & 8.323 & 11.832 & 20.262 \\
 & YM        & 0.899 & 0.899 & 0.906 & 8.295 & 11.820 & 20.244 \\
 & BQR-AL    & 0.989 & 0.959 & 0.643 & 4.446 & 6.407 & 15.066 \\
 & Ext-AL    & 0.894 & 0.736 & 0.886 & 3.678 & 4.321 & 18.371 \\
\hline
\end{tabular}
}
\caption{Empirical coverage and average interval length for one-sided upper tolerance intervals at confidence level $1-\alpha=0.90$, across content levels $P$.}
\label{tab:ti_sim_nox}
\end{table}

To further illustrate the robustness of the calibrated Gibbs posterior, a sensitivity analysis was conducted by varying the sample size incrementally between $n=15$ and $n=30$ for an upper one-sided tolerance limit with $P=0.9$ and $1-\alpha=0.9$. This specific range was chosen to bracket the theoretical minimum sample size of $n=22$ required by the Wilks method for this configuration. The empirical coverage probabilities across these sample sizes reveal a stark contrast in how the different methods respond to small-sample constraints (Figure \ref{fig:both}). The Cal-Gibbs method remains consistently calibrated near the nominal $0.90$ level across the entire range of sample sizes, maintaining stability even when $n < 22$. In contrast, both the Wilks and YM methods exhibit significant coverage fluctuations, systematically under-covering for sample sizes below the theoretical threshold. It is only as the sample size approaches and exceeds $n=22$ that the nonparametric methods achieve the nominal coverage, eventually becoming notably conservative as the sample size increases toward $n=30$.

\begin{figure}[htbp]
\centering
\includegraphics[width=0.47\textwidth]{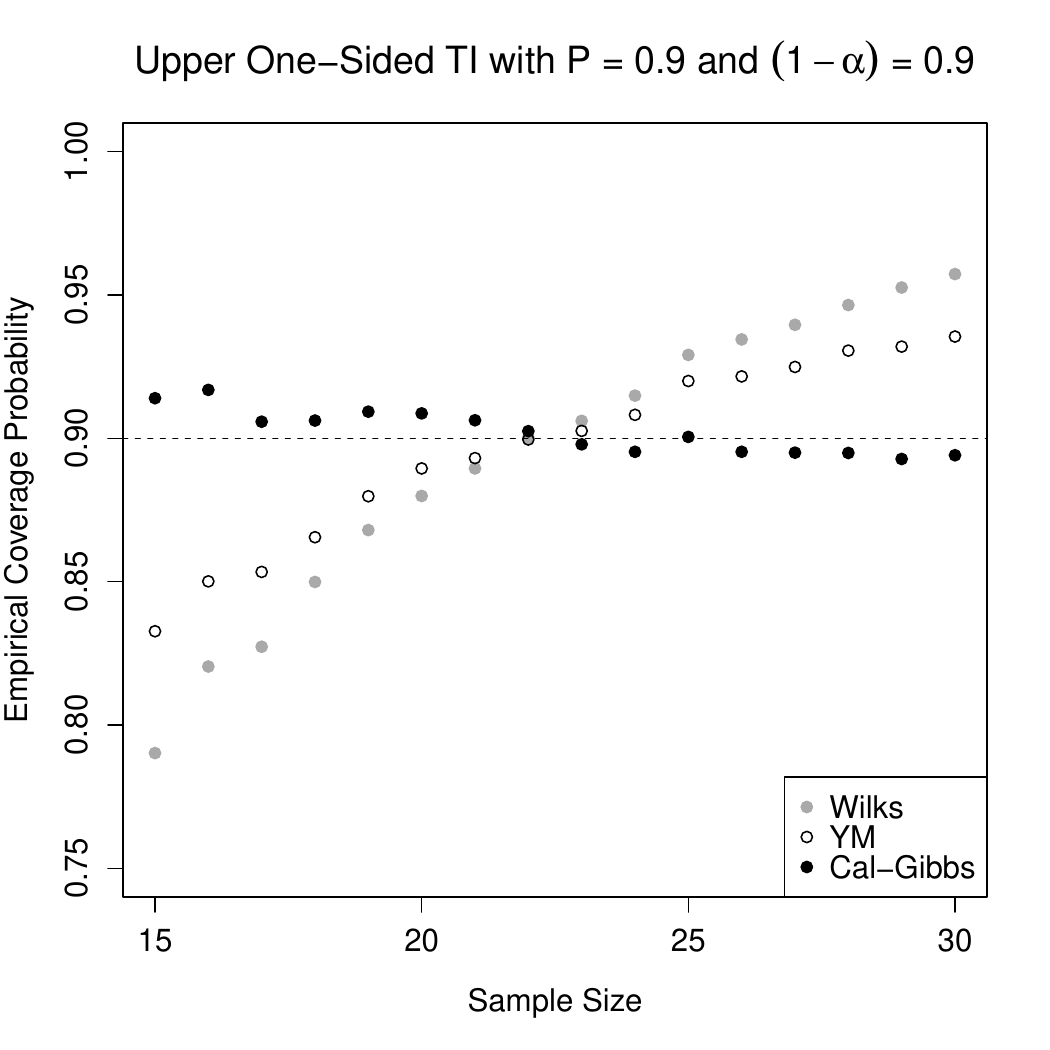}
\hfill
\includegraphics[width=0.47\textwidth]{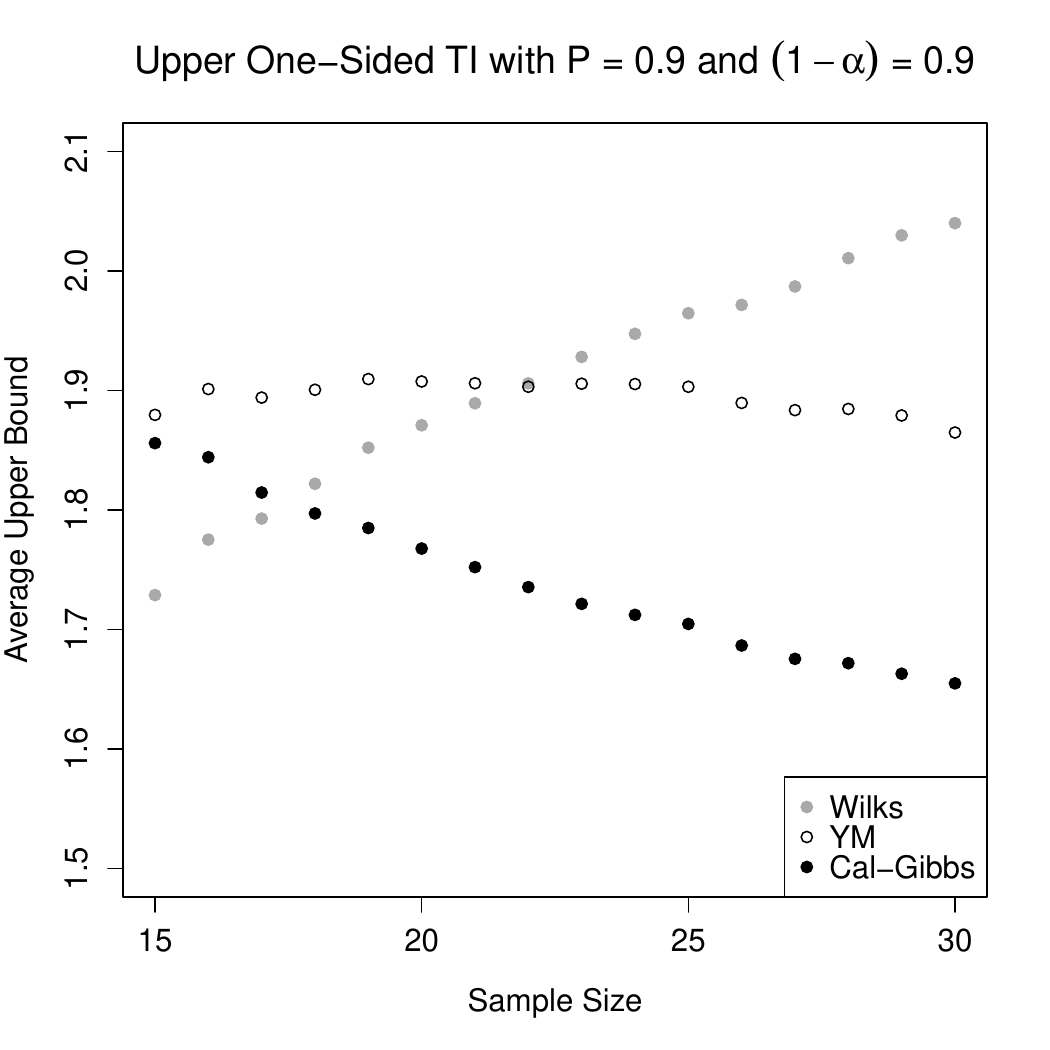}
\caption{Empirical performance of upper one-sided tolerance intervals with content level $P=0.9$ and confidence level $1-\alpha=0.9$ as a function of sample size. Left panel: empirical coverage probability, with the dashed horizontal line indicating the nominal confidence level. Right panel: average upper tolerance bound. Results are shown for the Wilks nonparametric method, the YM method, and the calibrated Gibbs posterior approach.}
\label{fig:both}
\end{figure}

The efficiency of these limits is further clarified by examining the average upper bounds over the same range of sample sizes. Despite maintaining higher and more consistent coverage than the nonparametric benchmarks at the smallest sample sizes, the Cal-Gibbs method simultaneously produces tighter limits. For sample sizes where Wilks and YM are under-covering, the Cal-Gibbs method provides valid coverage while yielding average upper bounds that are lower than the non-conservative YM method. As the sample size increases beyond the critical threshold of $n=22$, the Cal-Gibbs average upper bound continues to decrease and remains significantly lower than the bounds produced by the Wilks method, which grow increasingly wide as they over-cover. By $n=30$, the Cal-Gibbs bound is approximately $1.65$, whereas the Wilks bound exceeds $2.0$. This analysis confirms that the calibrated Gibbs posterior is not only more robust to small sample sizes where traditional order statistics fail, but it is also more efficient, providing narrower and more accurate tolerance limits while strictly maintaining the required frequentist confidence level.

Given that BQR-AL and Ext-AL failed to provide reliable coverage in the preceding one-sided stress tests, we exlcude them from the two-sided simulations (Section \ref{sec:twosidedsimulations}) to focus on the comparison between the proposed calibrated Gibbs method and the established nonparametric standards.

\subsection{Two-Sided Tolerance Interval  Results}\label{sec:twosidedsimulations}
For the two-sided tolerance interval evaluations, we maintain the same experimental rigor, conducting 1,000 simulations while fixing the confidence level at $1-\alpha = 0.90$ and varying the content level $P \in \{0.90, 0.95, 0.99\}$. The analysis focuses on two distinct distributional settings: a symmetric Normal distribution, $\mathcal{N}(10,3)$, and an asymmetric Beta$(5,2)$ distribution. Similar to the one-sided case, the choice of sample size $n$ is critical for the existence of a valid two-sided nonparametric limit. For a two-sided tolerance interval to exist using the sample extremes $Y_{(1)}$ and $Y_{(n)}$, the sample size must satisfy the condition $(n - 1)P^n - nP^{n-1} + 1 \ge 1-\alpha$. This requirement results in minimum sample sizes of $n = 38, 77,$ and $388$ corresponding to $P = 0.90, 0.95,$ and $0.99$, respectively. We utilize these thresholds in our simulation design to ensure the Wilks method is evaluated under conditions where it can mathematically achieve the target coverage.

A significant distinction in the two-sided evaluation is the consideration of two different coverage definitions: content-defined coverage (Section \ref{contentcalibration}) and quantile-defined coverage (Section \ref{quantilecalibration}). The results, summarized in Table \ref{tab:ti_sim_twosided}, reveal how the benchmarks perform under these competing criteria.

\begin{table}[htbp]
\centering
{\tiny
\begin{tabular}{l l l | ccc | ccc}
\hline
& & & \multicolumn{6}{c}{Content level $P$} \\
\cline{4-9}
& & & \multicolumn{3}{c|}{Coverage} & \multicolumn{3}{c}{Length} \\
Distribution & Method & Coverage type & $P=0.90$ & $P=0.95$ & $P=0.99$ & $P=0.90$ & $P=0.95$ & $P=0.99$ \\
\hline
\multirow{6}{*}{$\mathcal{N}(10,3)$}
 & Wilks & Content  & 0.899 & 0.900 & 0.900 & 12.785 & 14.442 & 17.783 \\
 & YM    & Content  & 0.897 & 0.898 & 0.899 & 12.731 & 14.414 & 17.780 \\
 & Cal-Gibbs & Content & 0.900 & 0.897 & 0.896 & 11.950 & 13.626 & 16.995 \\
 \cline{2-9}
 & Wilks & Quantile & 0.726 & 0.731 & 0.744 & 12.785 & 14.442 & 17.783 \\
 & YM    & Quantile & 0.724 & 0.729 & 0.744 & 12.731 & 14.414 & 17.780 \\
 & Cal-Gibbs & Quantile & 0.908 & 0.899 & 0.891 & 13.536 & 14.901 & 17.944 \\
\hline
\multirow{6}{*}{Beta$(5,2)$}
 & Wilks & Content  & 0.900 & 0.909 & 0.905 & 0.625 & 0.685 & 0.786 \\
 & YM    & Content  & 0.894 & 0.907 & 0.904 & 0.623 & 0.684 & 0.786 \\
 & Cal-Gibbs & Content & 0.898 & 0.899 & 0.906 & 0.601 & 0.664 & 0.780 \\
 \cline{2-9}
 & Wilks & Quantile & 0.729 & 0.734 & 0.736 & 0.625 & 0.685 & 0.786 \\
 & YM    & Quantile & 0.729 & 0.731 & 0.735 & 0.623 & 0.684 & 0.786 \\
 & Cal-Gibbs & Quantile & 0.890 & 0.904 & 0.901 & 0.673 & 0.744 & 0.837 \\
\hline
\end{tabular}
}
\caption{Empirical coverage and average interval length for two-sided tolerance intervals at confidence level $1-\alpha=0.90$, across content levels $P$.}
\label{tab:ti_sim_twosided}
\end{table}

The results for content-defined coverage show that the Wilks, YM, and Cal-Gibbs methods all maintain empirical coverage near the nominal $0.90$ level across both symmetric and asymmetric distributions. However, the Cal-Gibbs method consistently achieves this coverage with shorter average interval lengths. For example, in the $\mathcal{N}(10,3)$ setting with $P=0.90$, the Cal-Gibbs method produces an average length of $11.950$, whereas the Wilks and YM methods yield $12.785$ and $12.731$, respectively.

The most striking differences appear when examining quantile-defined coverage. While the Cal-Gibbs method successfully maintains coverage probabilities near $0.90$ for both distributions, the performance of the Wilks and YM methods degrades significantly, with coverage levels dropping to approximately $0.73$ in both the Normal and Beta settings. This sharp decline is not entirely unexpected, as traditional nonparametric methods like Wilks and YM were specifically designed to satisfy content-based requirements rather than to provide coverage guarantees for specific population quantiles. However, this highlights a major practical limitation in that these methods lack the structural flexibility to adapt to different inferential goals. In contrast, the calibrated Gibbs posterior provides a unified framework that remains robust regardless of the coverage definition. Notably, the quantile-defined intervals produced by Cal-Gibbs are consistently wider than their content-defined counterparts. For example, in the $\mathcal{N}(10,3)$ case with $P=0.90$, the average length increases from $11.950$ (content) to $13.536$ (quantile). This increased width is a necessary consequence of the more stringent quantile-defined criterion, which requires simultaneous coverage of two specific tail probabilities rather than a single aggregate mass. This adaptability is driven by the calibration of the learning rate $\eta$, which takes on different values depending on whether the objective is targeting specific population quantiles or overall content. By adjusting the posterior spread through $\eta$, the method ensures reliable and efficient tolerance limits tailored to the inferential requirements at hand.

\section{Applications}
We illustrate the calibrated Gibbs posterior methodology using three real-world datasets from ecology, biopharmaceutical drug development, and environmental health. These examples span a wide range of sample sizes, from the large-scale longleaf pine dataset to the smaller air lead level dataset. In each case, we implement the learning rate calibration procedure and construct tolerance intervals targeting the desired frequentist coverage. Together, these applications demonstrate how the calibration algorithm performs across diverse data structures and sample sizes.

\subsection{Longleaf Pines Data}
The first application involves data from a census of longleaf pines (\emph{Pinus palustris}) located in the Wade Tract, an old-growth forest in Thomas County, Georgia. Originally surveyed by \cite{platt1988population}. The dataset represents a marked point pattern where each tree’s coordinate is associated with its diameter at breast height (dbh) as a measure of size. The study area covers a 200 $\times$ 200 meter region containing 584 individual trees. While the complete census includes all specimens with a diameter of at least 2 cm, many ecological analyses focus on the ``adult" population, typically defined by a size threshold of 30 cm or greater. This dataset is characterized by spatial inhomogeneity and has been used previously in the context of non-parametric tolerance intervals, notably by \cite{frey:2010}. The data are publicly available and can be accessed through the \texttt{spatstat} package \citep{baddeley2015spatial} in \texttt{R}. Our analysis adopts this dataset as a benchmark for constructing two-sided tolerance intervals via the calibrated Gibbs posterior, focusing on the distributional properties of tree diameters without assuming a specific parametric form.

The results of the analysis demonstrate close alignment with established nonparametric methods while offering the added flexibility of Bayesian-inspired uncertainty quantification. The Wilks method yields a two-sided $(P = 0.50, 1 - \alpha = 0.90)$ tolerance interval of $(8.1, 42.9)$, and the YM method provides a slightly tighter bound of $(8.1, 42.7)$. Our proposed calibrated Gibbs method produces a content-based interval of $(8.46, 42.68)$, which is notably more efficient (narrower) than both nonparametric benchmarks while still maintaining the required frequentist guarantees. Furthermore, the calibrated Gibbs method allows for the construction of a quantile-specific interval, which targets the central 50\% of the population by bounding the 0.25 and 0.75 quantiles. This results in an interval of $(7.05, 43.92)$. As noted in our simulation studies, the $\eta$ calibration adapts to the specific inferential goal, resulting in a slightly wider interval for the quantile-defined case to ensure that both individual tail requirements are satisfied with the nominal 90\% confidence. This distinction highlights the utility of the calibrated Gibbs framework in ecological monitoring, where management objectives often rely on specific population thresholds (such as minimum size requirements for adult classification) rather than aggregate mass alone.

\subsection{Relative Potency Data}
The second application centers on relative potency measurements from biopharmaceutical manufacturing, a dataset introduced by \cite{lewis:2022} and subsequently used by \cite{choi:2025} to evaluate nonparametric tolerance intervals. In this context, relative potency compares the biological activity of a manufactured batch to a reference standard. Maintaining potency within a strict 90\% to 110\% window is a critical quality requirement; deviations from this range can lead to compromised efficacy or increased safety risks. The dataset consists of 25 observations spanning five distinct manufacturing campaigns, as shown in Table \ref{tab:potency_data}.

\begin{table}[ht]
\centering
\begin{tabular}{ccccc}
\hline
Campaign 1 & Campaign 2 & Campaign 3 & Campaign 4 & Campaign 5 \\ \hline
95.661     & 98.830     & 96.665     & 98.198     & 95.922     \\
102.259    & 94.887     & 106.234    & 98.186     & 102.956    \\
103.135    & 103.362    & 103.735    & 107.872    & 101.596    \\
99.827     & 94.117     & 104.317    & 99.987     & 96.806     \\
           &            & 101.807    & 103.051    & 107.041    \\
           &            &            & 106.445    & 92.589     \\ \hline
\end{tabular}
\caption{Relative potency data (as a percentage) for 25 samples across five manufacturing campaigns.}\label{tab:potency_data}
\end{table}

Following the approach of \cite{choi:2025}, we treat these observations as a representative sample of the overall manufacturing process to establish a population-level tolerance interval, rather than modeling individual batch effects. By using the calibrated Gibbs posterior, we construct a two-sided tolerance interval to establish whether the relative potency measurements meet the defined manufacturing specifications.

Following the specifications used by \cite{choi:2025}, we consider the construction of a two-sided $(P = 0.95, 1 - \alpha = 0.95)$ tolerance interval. This setting poses a substantial challenge for traditional nonparametric procedures. In particular, the Wilks method requires the sample size to satisfy the two-sided feasibility condition $(n - 1)P^n - nP^{n-1} + 1 \ge 1 - \alpha$, which implies a minimum of $n = 93$ observations for a valid interval based on sample extremes. Because the available dataset contains only $n = 25$ measurements, the Wilks method is mathematically inapplicable. Nevertheless, benchmarking remains possible using the YM method, which employs interpolated order statistics to construct tolerance limits in small-sample settings.

The practical implications of the chosen methodology are important in this regulatory context. The YM method yields a tolerance interval of (88.66, 110.01), marginally exceeding the 90\%--110\% specification limits.. In a strict quality-control assessment, an  interval of this form could raise concerns regarding process capability and might prompt additional investigation or monitoring of the manufacturing process.

The calibrated Gibbs posterior produces alternative tolerance intervals based on two calibration targets. The content-based calibrated Gibbs interval is $(91.27, 109.21)$, while the quantile-based calibrated Gibbs interval is $(88.87, 111.98)$. Both intervals achieve the nominal $(P = 0.95, 1-\alpha = 0.95)$ coverage through the learning-rate calibration procedure described in Section~\ref{sec:calibration}. Relative to the YM interval, the calibrated Gibbs intervals distribute uncertainty differently in small samples. The content-based interval is more concentrated around the central portion of the data, whereas the quantile-based interval is wider due to the direct calibration of the bounding quantiles. These differences illustrate how alternative tolerance constructions can produce different interval widths while maintaining nominal coverage. From a practical perspective, this example highlights the sensitivity of tolerance interval conclusions to the chosen construction method when the sample size is limited. Order-statistic procedures rely heavily on sample extremes, whereas the calibrated Gibbs approach incorporates information from the full sample through the loss-based posterior formulation.

\subsection{Air Lead Levels}
The final application involves environmental monitoring data collected by the National Institute for Occupational Safety and Health (NIOSH). The dataset, which records air lead concentrations ($\mu g/m^3$) at $n=15$ different laboratory locations, was originally discussed by \cite{krishnamoorthy2009statistical} and later by \cite{young:2014}. Environmental data of this type are often expensive to obtain, resulting in the small sample sizes seen here. Previous analyses have typically relied on a log-transformation to satisfy normality assumptions before constructing tolerance limits. In contrast, we use this dataset to demonstrate the performance of the calibrated Gibbs posterior in a small-sample setting, where we construct a one-sided upper tolerance limit directly. The 15 observations are provided in Table \ref{tab:air_lead}.

\begin{table}[ht]
\centering
\begin{tabular}{cccccccc}
\hline
200 & 120 & 15 & 7 & 86 & 48 & 61 & 380\\
 80 & 29 & 1000 & 350 & 1400 & 110 & 37 & \\ \hline
\end{tabular}
\caption{Air lead levels ($\mu g/m^3$) collected at $n=15$ locations by NIOSH.}\label{tab:air_lead}
\end{table}

Following the setup in \cite{young:2014}, the objective is to construct a one-sided upper $(P = 0.75, 1 - \alpha = 0.85)$ tolerance limit for air lead concentrations. This configuration is notable because the minimum sample size required for the Wilks method is relatively small, $n \ge \ln(\alpha)/\ln(P) \approx 6.6$. With $n = 15$ observations available, the nonparametric feasibility conditions are easily satisfied, though the resulting limits may still be conservative. Using the Wilks method, the upper tolerance limit corresponds to the 14th order statistic and yields an estimate of $1000.00$. The YM method, which employs interpolated order statistics to mitigate discreteness, produces a lower but still conservative limit of $722.35$.

This dataset provides a challenging stress test for the calibrated Gibbs posterior due to its heavy-tailed and highly skewed structure. When applying the standard Robbins-Monro stochastic approximation procedure to calibrate the learning rate $\eta$, the algorithm fails to stabilize at a positive value, with $\eta$ repeatedly drifting toward zero. This behavior indicates that, under the asymmetric Laplace working likelihood, the posterior uncertainty cannot be adequately adjusted to achieve the desired frequentist coverage through local stochastic updates alone. To address this issue, we performed a fine-grained grid search over $\eta \in (0, 1]$, identifying an optimal learning rate of $\eta = 0.0034$. The resulting value is exceptionally small, reflecting the need to substantially inflate posterior uncertainty in order to maintain the nominal 85\% confidence level. Using this calibrated learning rate, the calibrated Gibbs posterior yields an upper tolerance limit of $436.01$.

This case study highlights an important practical consideration: while the calibrated Gibbs posterior can produce substantially more efficient tolerance limits than nonparametric benchmarks, the calibration of the learning rate is not always a trivial optimization task. In sparse or highly skewed settings, stochastic approximation methods may fail to converge, necessitating alternative calibration strategies such as the grid search employed here. Nonetheless, once a suitable learning rate is identified, the calibrated Gibbs posterior continues to deliver valid frequentist coverage with noticeably tighter bounds than traditional order-statistic-based approaches.

\section{Discussion}
\label{sec:discussion}
The results of this work demonstrate that the calibrated Gibbs posterior provides a robust and flexible framework for constructing nonparametric tolerance intervals. By reframing the tolerance interval problem as an inferential task on population quantiles, the methodology bridges the gap between Bayesian uncertainty quantification and frequentist coverage guarantees, offering a unified approach to tolerance interval estimation.

A key advantage of this framework is its ability to overcome the inherent rigidity of traditional nonparametric methods. Classical procedures, such as Wilks-type intervals, often require large sample sizes and lack adaptability to alternative definitions of coverage. In contrast, the Gibbs posterior leverages the full data distribution through the check loss function, rather than relying solely on extreme order statistics, enabling precise and valid tolerance bounds even in small-sample settings.

Another important contribution is the distinction between content-defined and quantile-defined coverage for two-sided intervals. By calibrating the learning rate $\eta$, the framework can be tuned to satisfy either criterion. Quantile-defined intervals are naturally wider due to their coverage of specific population tails, but they provide a level of inferential precision that mass-based intervals cannot match. This flexibility allows practitioners to align interval construction with the specific inferential needs of their applications, whether the focus is on aggregate population coverage or on tail-specific thresholds.

Several avenues for future research arise naturally from this work. Extensions to regression settings could enable tolerance bands that vary with covariates, while multivariate loss functions could facilitate nonparametric tolerance regions for higher-dimensional outcomes. Additional developments might explore alternative loss functions for increased robustness, approximate sampling methods for computational efficiency, or integration with hierarchical models for structured data. Such extensions would broaden the applicability of calibrated Gibbs inference while preserving its core strengths: nonparametric reliability, flexible coverage definitions, and coherent uncertainty quantification.
\section*{Supplementary material}
The Supplementary Material includes: (i) Proofs of  Theorem
\ref{thm:onesided-main} and Theorem \ref{thm:twosided-main}; (ii) mathematical
details for all comparative methods in Section \ref{subsec:comparative}; (iii)
computational implementation and tuning details; and (iv) expanded simulation
tables and figures for practical considerations (Section
\ref{subsec:considerations}). (PDF file)

\section*{Data Availability Statement}
The longleaf pine dataset analyzed in this paper is publicly available through the \texttt{spatstat} R package. The air lead levels data and the relative potency data are reported in Tables \ref{tab:air_lead} and \ref{tab:potency_data}, respectively. All data used in the simulation studies are generated from known distributions. 

\begin{singlespace}
\bibliographystyle{apalike}
\bibliography{references}
\end{singlespace}

\newpage

\setcounter{section}{0}
\section*{Supplementary Material}

\section{Asymptotic Validity of Calibrated Gibbs Tolerance Intervals}

Let $Y_1,\dots,Y_n$ be i.i.d. from an unknown distribution $F$ with density $f$. For
$\tau\in(0,1)$, define empirical risk as
$ R_{n,\tau}(q)=\sum_{i=1}^n \rho_\tau(Y_i-q).$ We make the following assumptions:
\begin{itemize}
\item $F$ has a density $f$ continuous at $Q_\tau$ with $f(Q_\tau)>0$.
\item The prior $\pi_0(q)$ is continuous and strictly positive in a neighborhood of $Q_\tau$.
\item Calibration restricts $\eta$ to a compact interval
\[
[\underline{\eta},\overline{\eta}] \subset (0,\infty)
\]
and returns $\hat\eta_n \in [\underline{\eta},\overline{\eta}]$.
\item The estimated coverage function $\widehat C_n(\eta)$ used in calibration satisfies
\[
\sup_{\eta\in[\underline{\eta},\overline{\eta}]}
|\widehat C_n(\eta)-C_n(\eta)|
\overset{p}{\longrightarrow}0.
\]
\end{itemize}

\subsection{One-Sided Interval}

Fix $\tau=P$ and let $Q_\tau$ be the population $\tau$-quantile. Let $\hat Q_\tau$ be the
empirical minimizer of $R_{n,\tau}(q)$ and define the Gibbs posterior $
\pi_\eta(q\mid Y_{1:n})\propto \exp\{-\eta R_{n,\tau}(q)\}\pi_0(q), $ with learning rate $\eta>0$.
Let $U_n(\eta)$ denote the $(1-\alpha)$ posterior quantile of $\pi_\eta(\cdot\mid Y_{1:n})$.

\paragraph{Proof of Theorem 1.}

Using Knight's identity and a local expansion around $Q_\tau$, for bounded $t$,
\[
R_{n,\tau}\!\left(Q_\tau+\frac{t}{\sqrt n}\right)-R_{n,\tau}(Q_\tau)
=-\frac{t}{\sqrt n}\sum_{i=1}^n\psi_\tau(Y_i-Q_\tau)+\frac{f(Q_\tau)}{2}t^2+o_p(1),
\]
where $\psi_\tau(u)=\tau-\1(u<0)$. Re-centering at $\hat Q_\tau=Q_\tau+O_p(n^{-1/2})$ gives
\[
R_{n,\tau}\!\left(\hat Q_\tau+\frac{t}{\sqrt n}\right)-R_{n,\tau}(\hat Q_\tau)
=\frac{f(Q_\tau)}{2}t^2+o_p(1).
\]
Hence, by Laplace approximation,
\[
\sqrt n(q-\hat Q_\tau)\mid Y_{1:n}\Rightarrow N\!\left(0,\frac{1}{\eta f(Q_\tau)}\right),
\]
so
\[
U_n(\eta)=\hat Q_\tau+z_{1-\alpha}\sqrt{\frac{1}{\eta n f(Q_\tau)}}+o_p(n^{-1/2}).
\]
The quantile CLT gives
\[
\sqrt n(\hat Q_\tau-Q_\tau)\Rightarrow N\!\left(0,\frac{\tau(1-\tau)}{f(Q_\tau)^2}\right).
\]
Therefore
\[
C_n(\eta)=P_F\{Q_\tau\le U_n(\eta)\}\to
C(\eta)=\Phi\!\left(z_{1-\alpha}\sqrt{\frac{f(Q_\tau)}{\eta\tau(1-\tau)}}\right).
\]
Since $C(\eta)$ is continuous and strictly decreasing, $C(\eta)=1-\alpha$ has the unique solution $\eta^\star=f(Q_\tau)/\{\tau(1-\tau)\}$. Uniform consistency of $\widehat C_n$ implies $\hat\eta_n\overset{p}{\to}\eta^\star$, and substituting into $C_n$ yields the desired coverage. \hfill$\square$

\subsection{Two-Sided Interval}

Fix $0<\tau_L<\tau_U<1$, set $Q_L=Q_{\tau_L}$, $Q_U=Q_{\tau_U}$, and $P=\tau_U-\tau_L$.
Define the joint empirical risk
\[
R_n(q_L,q_U)=\sum_{i=1}^n\big\{\rho_{\tau_L}(Y_i-q_L)+\rho_{\tau_U}(Y_i-q_U)\big\},
\]
with minimizer $(\hat Q_L,\hat Q_U)$.
The joint Gibbs posterior is
\[
\pi_\eta(q_L,q_U\mid Y_{1:n})\propto \exp\{-\eta R_n(q_L,q_U)\}\pi_0(q_L,q_U).
\]
Let $\bar\mu=E[(q_L+q_U)/2\mid Y_{1:n}]$ and define, for a posterior draw,
\[
U^\star=\max\{q_U,\,2\bar\mu-q_L\}.
\]
Let $U_n(\eta)$ be the $(1-\alpha)$ posterior quantile of $U^\star$, and set
$L_n(\eta)=2\bar\mu-U_n(\eta)$.

\paragraph{Proof of Theorem 2.}

Applying Knight's identity jointly yields, uniformly for bounded $(t_L,t_U)$,
$$\begin{aligned}
R_n \left( Q_L + \frac{t_L}{\sqrt n}, Q_U + \frac{t_U}{\sqrt n} \right) &- R_n(Q_L, Q_U) \\
&= -\frac{t_L}{\sqrt n} \sum_{i=1}^n \psi_{\tau_L}(Y_i - Q_L) - \frac{t_U}{\sqrt n} \sum_{i=1}^n \psi_{\tau_U}(Y_i - Q_U) \\
&\quad + \frac{f(Q_L)}{2}t_L^2 + \frac{f(Q_U)}{2}t_U^2 + o_p(1)
\end{aligned}$$
Re-centering at $(\hat Q_L,\hat Q_U)$ gives a local quadratic form, so Laplace approximation implies
\[
\sqrt n\big((q_L,q_U)-(\hat Q_L,\hat Q_U)\big)\mid Y_{1:n}
\Rightarrow N_2\!\left(0,\operatorname{diag}\{(\eta f(Q_L))^{-1},(\eta f(Q_U))^{-1}\}\right).
\]
Therefore
\[
U_n(\eta)=\hat Q_U+\frac{m_{1-\alpha}(\eta)}{\sqrt n}+o_p(n^{-1/2}),
\qquad
L_n(\eta)=\hat Q_L-\frac{m_{1-\alpha}(\eta)}{\sqrt n}+o_p(n^{-1/2}),
\]
where $m_{1-\alpha}(\eta)$ is the $(1-\alpha)$ quantile of $M_\eta=\max\{Z_U,-Z_L\}$ with
$Z_U\sim N(0,1/(\eta f(Q_U)))$, $Z_L\sim N(0,1/(\eta f(Q_L)))$. Since $M_\eta$ scales as
$\eta^{-1/2}$, $m_{1-\alpha}(\eta)$ is continuous and strictly decreasing in $\eta$.

Next, from the joint quantile CLT,
\[
\sqrt n\begin{pmatrix}\hat Q_L-Q_L\\ \hat Q_U-Q_U\end{pmatrix}
\Rightarrow N_2(0,\Sigma_Q),
\]
which implies limit coverage functions for both calibration criteria:
\[
C_n^{(Q)}(\eta)\to C^{(Q)}(\eta),
\qquad
C_n^{(C)}(\eta)\to C^{(C)}(\eta)
=\Phi\!\left(\frac{(f(Q_U)+f(Q_L))m_{1-\alpha}(\eta)}{\sqrt{P(1-P)}}\right).
\]
Both are continuous and strictly decreasing in $\eta$, so each has a unique root at level
$1-\alpha$ within the calibration bracket. Uniform consistency of estimated coverage then gives
$\hat\eta_n\overset{p}{\to}\eta_Q^\star$ or $\eta_C^\star$, respectively, and plug-in yields
asymptotic nominal coverage under each success criterion. \hfill$\square$

Although the unconstrained Gibbs posterior on $(q_L,q_U)$ may assign nonzero finite-sample mass to the region $q_L>q_U$,
this is asymptotically negligible: since $Q_L<Q_U$ and the posterior concentrates at $n^{-1/2}$ rate around $(Q_L,Q_U)$,
we have $\Pi_\eta(q_L>q_U\mid Y)\to 0$ in probability.

\section{Sensitivity of the GPC Algorithm}
The practical performance of the Generalized Posterior Calibration (GPC) procedure depends on several implementation choices, including the initial value of $\eta$, the step-size parameters $(c,\gamma)$, the number of bootstrap replicates $B$, and the maximum number of iterations. In finite samples, the estimated coverage function $\hat{C}(\eta)$ can exhibit substantial Monte Carlo variability, particularly for heavy-tailed or highly skewed distributions. As a result, the Robbins--Monro updates may converge slowly, oscillate, or settle at suboptimal values if these tuning parameters are poorly chosen.

In routine applications, we therefore recommend an initial pilot phase in which the calibration procedure is run under a range of starting values and algorithmic settings. These pilot runs can be used to identify initial values of $\eta$ that produce reasonably stable bootstrap coverage estimates and to assess whether the stochastic approximation updates are progressing in a consistent direction. In both one- and two-sided cases, the pilot phase can be further refined or supplemented by initializing $\eta_0$ using a plug-in estimate of the theoretical optimum $\eta^*$. In the one-sided case, this is $\eta^* = f(Q_{\tau}) / \{ \tau(1-\tau) \}$ ; in the two-sided case, it involves the sum of densities $f(Q_L) + f(Q_U)$. While these densities are unknown, they can be estimated via kernel density estimation on the original sample. Monitoring the trajectory of $\eta_s$ and the associated estimated coverage values across iterations provides a simple diagnostic for detecting overly aggressive or insufficient updating. For example, persistent under-coverage despite a decreasing $\eta$ may signal the need for a more expansive step-size or a higher initial learning rate. In our experience, moderate step sizes and gradual decay rates tend to promote stable convergence, while overly aggressive choices of $(c,\gamma)$ may lead to persistent undercoverage or premature stabilization.

Our empirical investigations suggest that using approximately 10--20 Robbins--Monro iterations and on the order of $B=200$ bootstrap replicates provides a reasonable balance between computational cost and calibration accuracy in many settings, consistent with previous recommendations \citep{martin:2022}. When combined with a carefully chosen initial value of $\eta$, these settings typically yield calibrated tolerance intervals with reliable frequentist performance. However, we emphasize that GPC calibration is not fully automatic, and its effectiveness depends on thoughtful preliminary exploration. 

The supplementary simulation results illustrate how different tuning regimes affect convergence behavior, coverage accuracy, and interval length, and demonstrate the value of pilot-based tuning in practical applications. The tables and figure presented here provide diagnostic results supporting the discussion of practical considerations in Section 4.2. Although the generalized posterior calibration (GPC) procedure is theoretically justified \citep{martin:2022}, finite-sample performance can vary substantially depending on the initial learning rate, step-size parameters, and iteration scheme. In particular, the coverage estimate $\hat{C}(\eta_s)$ is subject to Monte Carlo variability arising from both bootstrap sampling and posterior draws. To mitigate this, we recommend conducting initial pilot runs. These runs serve a dual purpose: they allow the researcher to identify a well-informed starting value for $\eta_0$ and provide a diagnostic look at how the Robbins--Monro updates are behaving under specific hyperparameters. These results quantify this variability and illustrate the consequences of different calibration strategies, showing how choices in the stochastic approximation procedure affect convergence speed, stability, and interval length. 

For these diagnostics, we focus on the one-sided upper tolerance interval for the Normal(0,1) and Pareto(1,2) distributions. The same calibration behavior arises for two-sided tolerance intervals, but we restrict attention to the one-sided case to illustrate the dynamics of the learning-rate updates. For each distribution, we ran $R=1{,}000$ independent calibration experiments using three Robbins--Monro strategies \citep{robbins:1951}, referred to as \textit{Conservative}, \textit{Moderate}, and \textit{Aggressive}. The \textit{Conservative} regime is characterized by a small step-size and limited iterations; the \textit{Moderate} regime represents a balanced configuration intended for routine use; and the \textit{Aggressive} regime employs high-gain updates to explore the boundaries of the algorithm's numerical stability. Each strategy differs in its choice of step-size hyperparameters $(\eta_0, c, \gamma)$ and maximum iterations (max\_iter), which control the starting value, magnitude, and decay of updates in $\eta_s$. Table sets 1–4 summarize the calibrated learning rates $\hat{\eta}$, interval lengths, and coverage achieved across these regimes, while Figure \ref{fig:eta_traj} shows representative $\eta_s$ trajectories. Across both distributions, the Moderate regime consistently provides the most reliable trade-off between stability, coverage, and interval length, suggesting it as a practical default. While these diagnostics focus on two representative distributions, we observed similar qualitative behavior in additional experiments (not shown).

\subsection*{Normal Distribution}

The Normal(0,1) regimes were configured as follows:
\begin{itemize}
    \item Aggressive: $\eta_0 = 1.5$, $c = 1.75$, $\gamma = 0.5$, $\text{max\_iter} = 10$
    \item Moderate: $\eta_0 = 2.8$, $c = 0.5$, $\gamma = 0.75$, $\text{max\_iter} = 25$
    \item Conservative: $\eta_0 = 3.3$, $c = 0.2$, $\gamma = 0.9$, $\text{max\_iter} = 6$
\end{itemize}

These values were chosen based on preliminary pilot experiments to represent low-, medium-, and high-gain update regimes.

\begin{table}[ht]
\centering
\begin{tabular}{lcccc}
\hline
Strategy & Calibrated Cov. & Actual Coverage & Calibrated $\hat{\eta}$ & Interval Length \\ 
\hline
Aggressive   & 0.953 (0.030) & 0.939 (0.239) & 1.987 (0.224) & 1.843 (0.389) \\ 
Moderate     & 0.882 (0.065) & 0.891 (0.312) & 2.747 (0.173) & 1.723 (0.366) \\ 
Conservative & 0.834 (0.094) & 0.862 (0.345) & 3.276 (0.034) & 1.670 (0.357) \\ 
\hline
\end{tabular}
\caption{GPC performance for the Normal distribution ($n=22, P=0.90, 1-\alpha=0.90$). Mean (SD) reported over $R=1{,}000$ independent simulations per strategy.}
\label{tab:regime_summary_normal}
\end{table}

Table \ref{tab:regime_summary_normal} shows the average performance of each regime. Differences between calibrated and actual coverage reflect finite-sample error in estimating the coverage function and indicate residual Monte Carlo noise in the calibration step. The Moderate strategy achieves a good balance between coverage and interval length. The Conservative regime yields slightly lower coverage but comparable interval lengths, while the Aggressive regime attains slightly higher coverage with modestly longer intervals. Table \ref{tab:overall_summary_normal} pools all runs across strategies. Most runs have calibrated learning rates near the median (~2.8) and interval lengths near 1.7, but the full range of $\hat{\eta}$ (1.027–3.336) indicates that some runs produce more extreme updates. Similarly, interval lengths vary from 0.611 to 3.155, suggesting that while typical performance is stable, the stochastic nature of bootstrap and posterior sampling can occasionally produce outliers. 

\subsection*{Pareto Distribution}
We now turn to the more challenging heavy-tailed case. The Pareto(1,2) regimes were configured as follows:
\begin{itemize}
\item Aggressive: $\eta_0 = 5.0$, $c = 1.75$, $\gamma = 0.5$, $\text{max\_iter} = 10$
\item Moderate: $\eta_0 = 1.15$, $c = 0.5$, $\gamma = 0.75$, $\text{max\_iter} = 25$
\item Conservative: $\eta_0 = 4.0$, $c = 0.2$, $\gamma = 0.9$, $\text{max\_iter} = 6$
\end{itemize}

\begin{table}[ht]
\centering
\begin{tabular}{lcccccc}
\hline
Metric & Min. & 1st Qu. & Median & Mean & 3rd Qu. & Max. \\ 
\hline
Learning Rate ($\hat{\eta}$) & 1.027 & 2.171 & 2.789 & 2.670 & 3.254 & 3.336 \\ 
Interval Length             & 0.611 & 1.474 & 1.721 & 1.745 & 2.004 & 3.155 \\ 
\hline
\end{tabular}
\caption{Summary statistics for calibrated parameters across all pooled Normal diagnostic runs ($R=3{,}000$).}
\label{tab:overall_summary_normal}
\end{table}

Table \ref{tab:regime_summary_pareto} shows regime-level performance for the heavy-tailed Pareto distribution. Compared with the Normal case, the variability in calibrated learning rates and interval lengths is markedly larger. The Moderate regime again provides a balanced configuration, while the Aggressive and Conservative regimes produce extreme outcomes in some runs due to either high-gain updates or very slow adjustment. Table \ref{tab:overall_summary_pareto} pools all runs and illustrates the large spread: $\hat{\eta}$ ranges from 0.037 to 5.467 and interval lengths from 1.743 up to 52.801, reflecting the challenges of heavy-tailed data and the stochastic nature of bootstrap-based calibration. Most runs cluster near the median values (2.629 for $\hat{\eta}$, 4.048 for interval length), but extreme draws are not uncommon.

\begin{table}[ht]
\centering
\begin{tabular}{lcccc}
\hline
Strategy & Calibrated Cov. & Actual Coverage & Calibrated $\hat{\eta}$ & Interval Length \\ 
\hline
Aggressive   & 0.705 (0.071) & 0.742 (0.438) & 2.515 (1.096) & 4.650 (3.269) \\ 
Moderate     & 0.892 (0.075) & 0.900 (0.300) & 1.076 (0.272) & 5.141 (3.027) \\ 
Conservative & 0.542 (0.193) & 0.713 (0.453) & 3.870 (0.070) & 4.136 (2.024) \\ 
\hline
\end{tabular}
\caption{GPC performance for the Pareto distribution ($n=22, P=0.90, 1-\alpha=0.90$). Mean (SD) reported over $R=1{,}000$ independent simulations per strategy.}
\label{tab:regime_summary_pareto}
\end{table}

\begin{table}[ht]
\centering
\begin{tabular}{lcccccc}
\hline
Metric & Min. & 1st Qu. & Median & Mean & 3rd Qu. & Max. \\ 
\hline
Learning Rate ($\hat{\eta}$) & 0.037 & 1.201 & 2.629 & 2.487 & 3.839 & 5.467 \\ 
Interval Length             & 1.743 & 3.269 & 4.048 & 4.643 & 5.136 & 52.801 \\ 
\hline
\end{tabular}
\caption{Summary statistics for calibrated parameters across all pooled Pareto diagnostic runs ($R=3{,}000$).}
\label{tab:overall_summary_pareto}
\end{table}

Figure \ref{fig:eta_traj} visualizes representative trajectories of $\eta_s$ for each calibration regime. The Moderate regime demonstrates stable convergence toward the target learning rate, while the Aggressive regime exhibits volatility and occasional overshooting. The Conservative regime updates slowly, showing inertia in early iterations. These trajectories provide a visual diagnostic for assessing convergence and highlight the importance of tuning step sizes and initial values to achieve robust calibration.

\begin{figure}[ht]
\centering
\includegraphics[width=\textwidth]{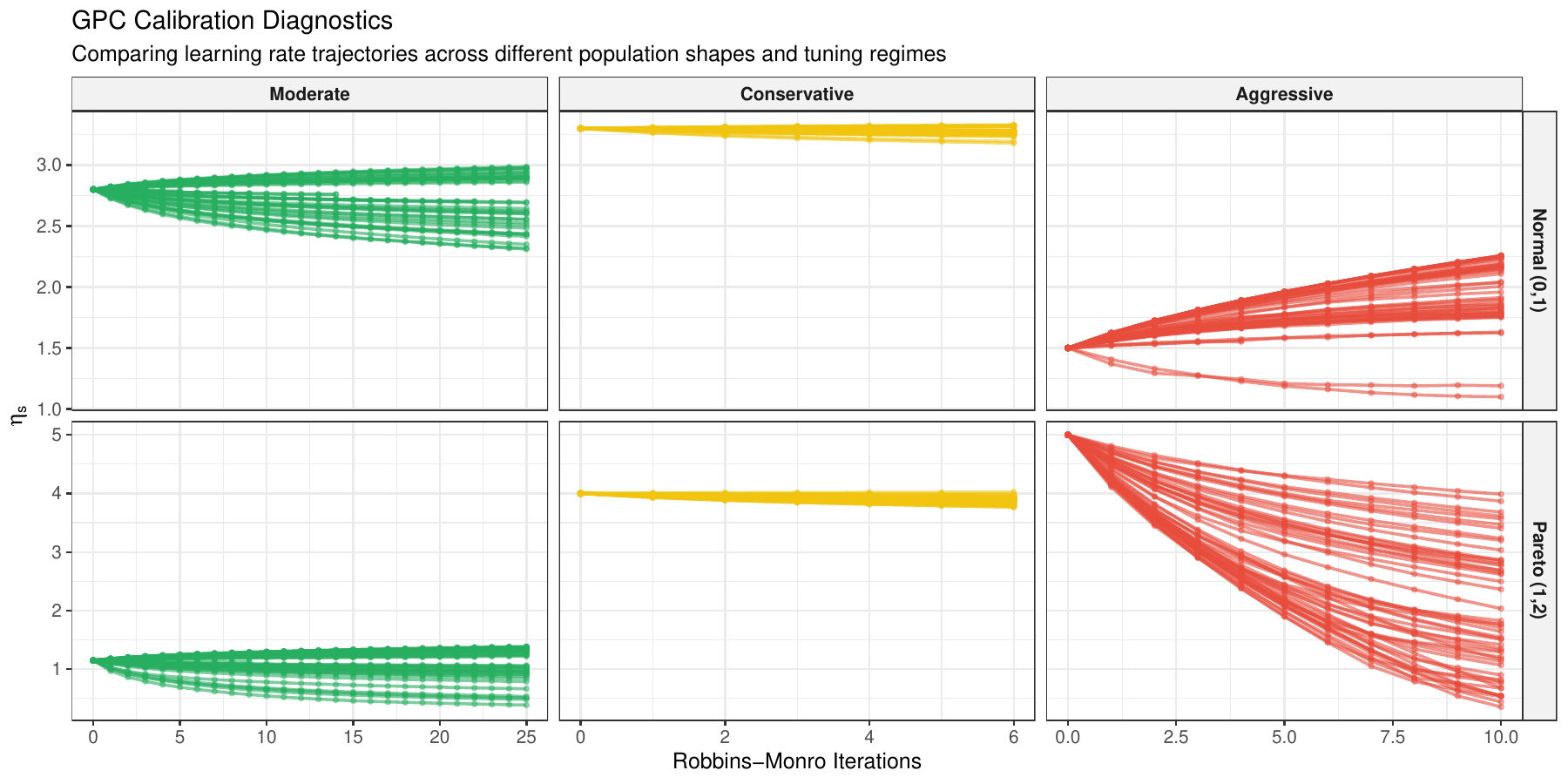}
\caption{Trajectories of $\eta_s$ across Normal and Pareto distributions. The grid contrasts the stable convergence of the Moderate regime against the volatility of the Aggressive regime and the slow inertia of the Conservative regime.}
\label{fig:eta_traj}
\end{figure}

Across both distributions, the results illustrate that Moderate step sizes provide a favorable balance between convergence speed and stability, while Aggressive updates can oscillate and Conservative updates may converge too slowly. Heavy-tailed distributions such as the Pareto show increased variability in $\hat{\eta}$ and interval lengths, highlighting the practical importance of careful pilot tuning. In practice, we recommend beginning with a Moderate configuration and adjusting toward Conservative or Aggressive regimes only when trajectory diagnostics indicate persistent bias or stagnation. Overall, these results support the recommendations in Section 4.2. Practitioners can fine-tune the calibration for their specific dataset by using pilot runs to observe trajectory patterns, which allows for adjusting $c$ and $\gamma$ to reduce volatility or increasing the number of Robbins-Monro iterations to overcome inertia. With thoughtful selection of initial values and step-size parameters, GPC calibration can reliably produce tolerance intervals with near-nominal frequentist coverage. These findings reinforce the discussion in Section 4.2 regarding the importance of trajectory diagnostics and pilot calibration.

\section{Comparative Methods}
The proposed calibrated Gibbs (Cal-Gibbs) posterior is evaluated against several established nonparametric and Bayesian benchmarks.
\subsubsection*{Wilks Method}
The classical nonparametric approach, introduced by \cite{wilks:1941}, defines tolerance limits using order statistics $Y_{(1)} \le Y_{(2)} \le \dots \le Y_{(n)}$. The goal is to select the appropriate order statistics to serve as limits such that the probability of the population coverage exceeding $P$ is at least $1-\alpha$. For a $(P, 1-\alpha)$ upper one-sided tolerance limit $U = Y_{(k)}$, we seek the smallest integer $k$ such that
\begin{align}
    \text{Pr}(F(Y_{(k)}) \ge P) = \sum_{j=0}^{k-1} \binom{n}{j} P^j (1-P)^{n-j} \ge 1-\alpha,
\end{align}
whereas for a $(P, 1-\alpha)$ lower one-sided tolerance limit $L = Y_{(k)}$, we seek the largest integer $k$ such that
\begin{align}
  \text{Pr}(F(Y_{(k)}) \leq 1-P) = \sum_{j=k}^{n} \binom{n}{j} (1-P)^j P^{n-j} \ge 1-\alpha. 
\end{align}
In the two-sided case $(L, U) = (Y_{(r)}, Y_{(s)})$, the coverage condition is determined by the number of observations contained within the limits, $m = s - r$, where we seek the smallest $m$ satisfying
\begin{align}
    \text{Pr}(F(Y_{(s)}) - F(Y_{(r)}) \ge P) = \sum_{j=0}^{s-r-1} \binom{n}{j} P^j (1-P)^{n-j} \ge 1-\alpha.
\end{align}
A major limitation of the Wilks method is its conservative nature arising from discreteness. Because the indices $k$, $r$, and $s$ must be integers, the achieved coverage probability often exceeds the nominal level $1-\alpha$, particularly for small to moderate sample sizes. As a result, the resulting tolerance limits are frequently more extreme than necessary, leading to wider intervals and reduced efficiency in practical applications.
\subsubsection*{Interpolated and Extrapolated Order Statistics (YM)}
To mitigate the discreteness and resulting conservatism of Wilks-type tolerance limits, \cite{young:2014} propose the use of interpolated order statistics. In the one-sided setting, the integer index $k$ appearing in the Wilks coverage equation is replaced by a fractional index $v \in \mathbb{R}$ obtained by solving
$\text{Pr}\!\left(F(Y_{(v)}) \ge P\right) = 1-\alpha$
as if $k$ were a continuous variable. The resulting tolerance limit is then defined via linear interpolation between adjacent order statistics,
\begin{align}
    Y_{(v)} = (\lfloor v \rfloor + 1 - v) Y_{(\lfloor v \rfloor)} + (v - \lfloor v \rfloor) Y_{(\lfloor v \rfloor + 1)}.
\end{align}
For two-sided tolerance intervals, the authors note that a direct extension of fractional order statistics can lead to unstable behavior. Instead, they propose a coverage-correcting interpolation at the interval endpoints. Let $k$ denote the smallest integer satisfying the Wilks coverage condition and define an interpolation weight
\[
\lambda = \frac{(1-\alpha) - \Pr(B \le k-2)}{\Pr(B = k-1)},
\]
where $B \sim \mathrm{Binomial}(n,P)$. The final tolerance interval is obtained by interpolating either the lower or upper endpoint using weight $\lambda$, with the shorter resulting interval selected.

When the required index exceeds the available sample size, which commonly occurs for small samples and high content levels, \cite{young:2014} further propose linear extrapolation beyond the extreme order statistics. Although extrapolation remains heuristic and may still under-cover in very small samples, their simulations demonstrate improved performance relative to using the sample extrema alone.

\subsubsection*{Bayesian Quantile Regression (BQR-AL)}
The check loss function $\rho_\tau(\cdot)$ used in the construction of our Gibbs posterior is closely related to the Asymmetric Laplace (AL) likelihood commonly employed in Bayesian Quantile Regression (BQR-AL) \citep{yu:2001}. Under the BQR-AL framework, the data are assumed to follow an AL distribution:
\begin{align}
f(y \mid Q_\tau, \sigma) = \frac{\tau(1-\tau)}{\sigma} \exp \left\{ -\frac{\rho_\tau(y - Q_\tau)}{\sigma} \right\},
\end{align}
where $\tau\in(0,1)$, $Q_\tau$ is the quantile of interest, and $\sigma$ is a scale parameter. In this traditional Bayesian setting, $\sigma$ is treated as a model parameter with its own prior distribution (e.g., an inverse-gamma prior), and inference is performed on the joint posterior of $(Q_\tau, \sigma)$. There is a direct mathematical connection between this likelihood and the Gibbs posterior: setting $\sigma = 1/\eta$ renders the AL likelihood proportional to the Gibbs kernel.

The Asymmetric Laplace distribution is typically employed as a working likelihood because its kernel coincides with the check loss function, providing a convenient Bayesian mechanism for quantile inference. However, since the data-generating process $F$ is unknown and unlikely to follow an AL distribution, the model is inherently misspecified. This misspecification is precisely what necessitates the calibration of $\eta$; without it, the posterior uncertainty is tied to a ``wrong" likelihood, leading to potential coverage failures.

\subsubsection*{The Extended Asymmetric Laplace Model  (Ext-AL)}
To provide a more flexible Bayesian benchmark, we also consider the Extended Asymmetric Laplace (Ext-AL) model. As noted by \cite{barata:2022} and \cite{yan:2025}, a practical limitation of the standard asymmetric Laplace distribution is that its skewness and tail behavior are rigidly linked to the quantile level $\tau$. To decouple these properties, the Ext-AL distribution is constructed as a structured normal variance–mean mixture. Specifically, the error $\epsilon$ is modeled with density
\begin{align}
f(\epsilon \mid \tau, \theta) = \int_{0}^{\infty} \text{N}(\epsilon \mid \mu(w), \sigma^2(w)) , dG(w; \theta),
\end{align}
where $G(w; \theta)$ s a mixing distribution, typically from the generalized inverse Gaussian family. The model is parameterized so that the $\tau$-th quantile of $\epsilon$ is fixed at zero, while allowing flexible skewness and tail behavior. By incorporating the Ext-AL model as a comparator, we assess whether a more flexible likelihood specification alone is sufficient to deliver frequentist tolerance guarantees. However, as demonstrated in the following simulations (Section 5.2), even with this added flexibility, the Ext-AL model remains a working likelihood without an explicit calibration mechanism for frequentist coverage, leading to unstable performance under heavy-tailed or strongly asymmetric settings.

\subsubsection*{Inference and Implementation for the Bayesian Models}

For the one-sided experiments, we employ weakly informative priors to ensure that inference is primarily data-driven. For the BQR-AL model, we use the normal-scale mixture representation implemented in the \texttt{bayesQR} package \citep{benoit:2017} in \texttt{R}. For the intercept-only model, we assign a diffuse Normal prior to the quantile, $Q_\tau \sim \text{N}(0, 10^6)$. Following the approach of \cite{yang:2015}, the scale parameter is fixed at $\sigma = 1$, effectively treating the check loss as a fixed working likelihood kernel for the initial posterior estimation. To address the inherent misspecification of the asymmetric Laplace likelihood, we apply a normal approximation to the resulting MCMC output, using the posterior mean together with an asymptotically valid sandwich-type covariance matrix to construct the final $(P, 1-\alpha)$ tolerance limits.

The Ext-AL model is implemented using the hierarchical prior specification recommended by \cite{yan:2025}. The quantile $Q_\tau$ is assigned a Normal prior centered at the empirical $\tau$-quantile of the sample with variance 10. The scale parameter $\sigma$ is given an $\text{Inverse-Gamma}(2, 2)$ prior, which yields a finite mean but infinite variance, allowing for substantial flexibility in scale estimation. The shape parameter $\gamma$ is assigned a Uniform prior, $\text{U}(u_{\gamma}, v_{\gamma})$, with bounds numerically determined for each $\tau$ to ensure that the $\tau$-th quantile remains fixed at zero.

Posterior inference for both models is conducted via Markov Chain Monte Carlo. Once posterior samples are obtained, the $(P, 1-\alpha)$ one-sided tolerance limits are constructed based on the posterior distribution of the relevant quantile. Specifically, the upper tolerance limit is defined as the $(1-\alpha)$ quantile of the posterior distribution of $Q_P$, while the lower tolerance limit is defined as the $\alpha$ quantile of the posterior distribution of $Q_{1-P}$. Because these Bayesian benchmarks exhibit poor performance in the one-sided setting, particularly failing to provide stable coverage in the heavy-tailed Pareto and heavy-tailed mixture scenarios (see Section 5.2), they are omitted from the two-sided evaluations (Section 5.3), where similar behavior is expected.

\end{document}